\newcommand{\beq}{\begin{equation}}
\newcommand{\eeq}{\end{equation}}
\newcommand\norm[1]{\left\lVert#1\right\rVert}
\theoremstyle{plain}
\newtheorem{lemmacounter}{Theorem}
\newtheorem{lemma}[lemmacounter]{Lemma}
\begin{document}

\title{Downlink Power Allocation for CoMP-NOMA in Multi-Cell Networks}
\author{Md Shipon Ali, Ekram Hossain, Arafat Al-Dweik, and Dong In Kim\thanks{M. S. Ali and E. Hossain are with the Department of Electrical and Computer Engineering at University of Manitoba, Canada (emails: alims@myumanitoba.ca, ekram.hossain@umanitoba.ca). A. Al-Dweik is with Electrical and Computer Engineering Department, Khalifa University, UAE (email: arafat.dweik@kustar.ac.ae).  D. I. Kim is with the School of Information and Communication Engineering at Sungkyunkwan University (SKKU), Korea (email: dikim@skku.ac.kr).}}
\maketitle

\maketitle

\begin{abstract}
This work considers the problem of dynamic power allocation in the downlink of multi-cell networks, where each cell utilizes non-orthogonal multiple access (NOMA)-based resource allocation. Also, coordinated multi-point (CoMP) transmission is utilized among multiple cells to serve users experiencing severe inter-cell interference (ICI). More specifically, we consider a two-tier heterogeneous network (HetNet) consisting of a high-power macro cell underlaid with multiple low-power small cells each of which  uses the same resource block. Under this {\em CoMP-NOMA framework}, CoMP transmission is applied to a user experiencing high channel gain with multiple base stations (BSs)/cells, while NOMA is utilized to schedule CoMP and non-CoMP users over the same transmission resources, i.e., time, spectrum and space. Different CoMP-NOMA models are discussed, but focus is primarily on the joint transmission CoMP-NOMA (JT-CoMP-NOMA) model. For the JT-CoMP-NOMA model, an optimal joint power allocation problem is formulated and the solution is derived for each CoMP-set consisting of multiple cooperating BSs (i.e., CoMP BSs). To overcome the substantial computational complexity of the joint power optimization approach, we propose a distributed power optimization problem at each cooperating BS whose optimal solution is independent of the solution of other coordinating BSs. The validity of the distributed solution for the joint power optimization problem is provided and numerical performance evaluation is carried out for the proposed CoMP-NOMA models including JT-CoMP-NOMA and coordinated scheduling CoMP-NOMA (CS-CoMP-NOMA). The obtained results reveal significant gains in spectral and energy efficiency in comparison with conventional CoMP-orthogonal multiple access (CoMP-OMA) systems.
\end{abstract}

\begin{IEEEkeywords}
Non-orthogonal multiple access (NOMA), coordinated multi-point (CoMP) transmission, multi-cell downlink transission, heterogeneous networks (HetNets), dynamic power allocation, spectral efficiency, energy efficiency.
\end{IEEEkeywords}

\section{Introduction}
\subsection{Preliminaries}

Due to its potential to significantly enhance the radio spectral efficiency, non-orthogonal multiple access (NOMA) is considered as a promising multiple access technology for fifth generation (5G) and beyond 5G (B5G) cellular systems \cite{saito2013}-\cite{msali2016}. The fundamental idea of NOMA is to simultaneously serve multiple users over the same transmission resources, i.e., time, spectrum and space, at the expense of inter-user interference. Although several NOMA techniques have been actively investigated over the last few years, majority of the efforts have been focused on power-domain NOMA \cite{islam2017,zte2016}, which exploits signal power diversity for each NOMA user at each NOMA receiver end. In a downlink transmission under power-domain NOMA, a base station (BS) transmitter schedules multiple users to use the same transmission resources by superposing their signals in the power domain. The superposition is performed in such a way that each NOMA user can successfully decode the desired signal by applying successive interference cancellation (SIC) technique at the corresponding receiver. In this paper, we consider {\em power-domain NOMA} and thus the term NOMA will refer to power-domain NOMA, unless it is mentioned otherwise. 

To maximize the sum-rate for downlink transmission with NOMA, BS power allocation  enables the NOMA users to perform SIC  according to the ascending order of their channel gains \cite{msali2016}. That is, prior to decoding the desired signal, each NOMA user will cancel signals of other NOMA users with  lower channel gains than the considered NOMA user. Within the same NOMA cluster, a NOMA user with higher channel gain is said to have a higher SIC order compared to a NOMA user with lower channel gain. The signals for other NOMA users with higher channel gains than the considered user will act as inter-NOMA-user interference (INUI). As a result, a cell-edge NOMA user generally experiences INUI due to signals for the cell-center users. Since with downlink transmission a NOMA user experiences the desired signal and the INUI signal over the same wireless channel, the corresponding signal-to-interference plus noise ratio (SINR) depends on the transmit power allocated to a particular user  in comparison with the sum of transmit powers allocated to the other NOMA users with higher SIC ordering \cite{msali2016}.

In a multi-cell network, co-channel downlink transmissions to cell-edge users strongly interfere with each other, which may result in low received SINR for cell-edge users. The interference could be more severe in a multi-cell heterogeneous network (HetNet) scenarios. In a co-channel HetNet, inter-cell interference (ICI) generated from a high power macro cell seriously affects on the SINR performance of the underlaid low power small cells, which may make traditional NOMA application inefficient for HetNets.
Therefore, advanced ICI management will be crucial for multi-cell downlink NOMA systems. To mitigate ICI for traditional downlink orthogonal multiple access (OMA)-based 4G cellular systems, third generation partnership project (3GPP) has adopted coordinated multi-point (CoMP) transmission technique in which multiple cells coordinately schedule/transmit to the ICI-prone users \cite{3GPP2011}. In this paper, we focus on the application of CoMP to NOMA-based multi-cell downlink transmissions in a two-tier HetNets in order to improve the network spectral efficiency. 

\subsection{Terminologies and Assumptions}

In this paper, we consider single BS at each cell (i.e., no cell sectorization) and thus the terms {\em CoMP-BS} and {\em CoMP-cell} are used interchangeably. A \textit{CoMP-set} defines the set of cells/BSs  which cooperate/coordinate to serve a user, where each coordinating cell/BS is defined as a \textit{CoMP-cell}/\textit{CoMP-BS}. The term \textit{CoMP-user} refers to a user whose desired signal is transmitted after coordination among CoMP-BSs belonging to a particular CoMP-set, while a \textit{non-CoMP-user} refers to a user whose desired signal is transmitted by only one BS without coordination with other BS(s). The term \textit{CoMP-NOMA} is used to indicate the application of CoMP in a NOMA system in which a CoMP-BS forms a NOMA cluster by including both CoMP-users and non-CoMP-users, according to the applied CoMP scheme. In addition, \textit{cluster-head} of a particular NOMA cluster defines a user who can cancel all INUI due to other users in the considered NOMA cluster.

For multi-cell downlink transmission,  a two-tier HetNet is considered with full frequency reuse where one macro cell is underlaid with several small cells (i.e., all cells use same resource block). 
The mathematical derivations and corresponding numerical analysis performed in this paper consider a particular resource block consisting of time and spectrum resource, e.g., resource block in an LTE system. A Rayleigh fading radio channel is considered where fading gain is assumed to be flat over a considered NOMA resource block. Single transmit and receive antennas are considered at both user and BS ends. 
NOMA clusters within each cell use orthogonal resource blocks (i.e., no inter-NOMA-cluster interference within a single cell). In the NOMA clusters served by different CoMP-BSs within a CoMP-set, a common CoMP-user is served by the CoMP-BSs using the same resource block(s)  . 

\subsection{Existing Research on CoMP-NOMA}

In spite of tremendous research interests in NOMA, very few existing works investigate NOMA  in multi-cell networks. 
Here, we briefly review the existing research on resource allocation and ICI mitigation/modeling in downlink multi-cell NOMA networks. 
In \cite{z.zhang2017}, the outage probability and achievable data rates are derived for uplink and downlink NOMA in a multi-cell homogeneous network. 
A joint problem for spectrum allocation and power control is formulated in \cite{j.zhao2017} for downlink in HetNets, where a many-to-one matching game is utilized for spectrum allocation and a non-convex optimization problem is formulated for power control. 
In \cite{fu2017}, under a particular rate constraint, the authors formulate a distributed optimization approach for sum transmit power minimization among a number of BSs in a downlink multi-cell network.

By utilizing CoMP technology among multiple cells, a downlink CoMP-NOMA model is studied in \cite{shin2016}. The authors discuss the application of various CoMP schemes in a multi-cell homogeneous network in which each cell utilizes $2$-user NOMA for downlink transmission.
In \cite{choi2014}, Alamouti code is utilized for joint downlink transmission to a cell-edge user under a CoMP framework consisting of two cells in a homogeneous network. In \cite{choi2014}, a CoMP-user forms a NOMA cluster with a non-CoMP-user, i.e., forms $2$-user NOMA clusters at each coordinating cell. 
Also, an application of NOMA in a downlink CoMP transmission scenario can be found in \cite{tian2016}, where NOMA is opportunistically used for a group of cell-edge users receiving CoMP transmission from multiple coordinating cells. The authors derive the outage probability for the proposed opportunistic NOMA system by considering a joint multi-cell power allocation among the CoMP-users. 

The authors in \cite{bey2016} also study a CoMP-NOMA system for downlink transmission and  propose a suboptimal scheduling strategy for NOMA users under CoMP transmission. 
In \cite{tian2017}, a downlink CoMP-NOMA system is utilized for the purpose of relaying a signal to a remote user who is unable to receive direct transmission from any coordinating cell. 
The authors  derive the outage probability for their proposed CoMP-NOMA system by considering a fixed power allocation strategy. In \cite{yang2017}, dynamic power control is used for multi-cell downlink NOMA for sum-power minimization and sum-rate maximization. The authors consider CoMP transmissions from the cells in a homogeneous network, where each cell considers two users in each NOMA cluster. 
On the other hand, by utilizing massive multiple-input multiple-output (MIMO)-enhanced macro cell and NOMA-enhanced small cell, a hybrid HetNet system is studied in \cite{liu2017}, where the authors derive the coverage probability of NOMA-enhanced small cells under the impact of MIMO-enabled macro cell. 
The problem of  resource allocation and sum-rate optimization for NOMA in a downlink multi-cell MIMO network are investigated in \cite{han2014}-\cite{n.yen2017}.

\subsection{Motivation and Key Contributions}
Although dynamic power allocation is critical to achieve performance gain in a NOMA system, most of the exiting research on CoMP-NOMA  considers fixed power allocation strategies and/or single-cell scenarios. 
Moreover, to be useful for practical scenarios,  performance modeling, analysis and optimization of NOMA should consider multi-cell scenarios including HetNets. Since the performance of a multi-cell network is prone to ICI, application of CoMP would be highly desirable, particularly when NOMA is used. Motivated by these, we investigate the dynamic power allocation problem for sum-rate maximization in downlink CoMP-NOMA in a multi-cell scenarios under minimum rate constraints for users in a NOMA cluster. 

In the proposed CoMP-NOMA model, each BS can form a NOMA cluster by including one/multiple CoMP-user(s) and one/multiple non-CoMP-user(s). The system categorizes users into CoMP-user and non-CoMP-user according to their received SINR. The CoMP-set is determined for each CoMP-user which in turn yields the number of CoMP-users in each CoMP-set. Within a CoMP-set, each coordinating cell, i.e., a CoMP-cell, forms NOMA cluster consisting of their non-CoMP-users and CoMP-users, based on the applied CoMP scheme. After forming NOMA clusters, dynamic power allocation is performed for each NOMA cluster at each CoMP-cell. 

Fig. \ref{fig:sm} demonstrates the proposed CoMP-NOMA model for a multi-cell two-tier HetNet scenario consisting of one high power macro base station (MBS), which is denoted as an eNB in LTE terminology, underlaid by two low power small cell base stations (SBSs). 
The figure shows two CoMP-sets: one $3$-BS CoMP-set and one $2$-BS CoMP-set. 
In the $3$-BS CoMP-set, UE$_{1,m}$, UE$_{1,s1}$, and UE$_{1,s2}$ are the non-CoMP-user equipment (UE)\footnote{Throughout the paper, we will use the terms ``user" and ``UE" interchangeably.} served only by MBS, SBS-1 and SBS-2, respectively, without any coordination among the CoMP-BSs, while CoMP-UEs UE$_{1,ms1s2}$ and UE$_{2,ms1s2}$ are served by all three CoMP-BSs in a coordinated manner. Similarly, in the $2$-BS CoMP-set, UE$_{2,m}$ and UE$_{2,s1}$ are the non-CoMP-UEs served only by eNB and SBS-1, respectively, while CoMP-UE UE$_{1,ms1}$ is served by both the CoMP-BSs, i.e., eNB and SBS-1. Note that Fig. \ref{fig:sm} demonstrates the joint transmission (JT)-CoMP-NOMA scenario where the CoMP-UEs receive multiple transmissions from the CoMP-BSs. However, signal transmissions to a CoMP-UE depends on the applied CoMP scheme which will be discussed in  detail in Section III. 
In \cite{msali_comp_mag}, we presented the concept of downlink CoMP-NOMA in a homogeneous multi-cell network scenario.

\begin{figure}[h]
\begin{center}
	\includegraphics[width=3.50 in]{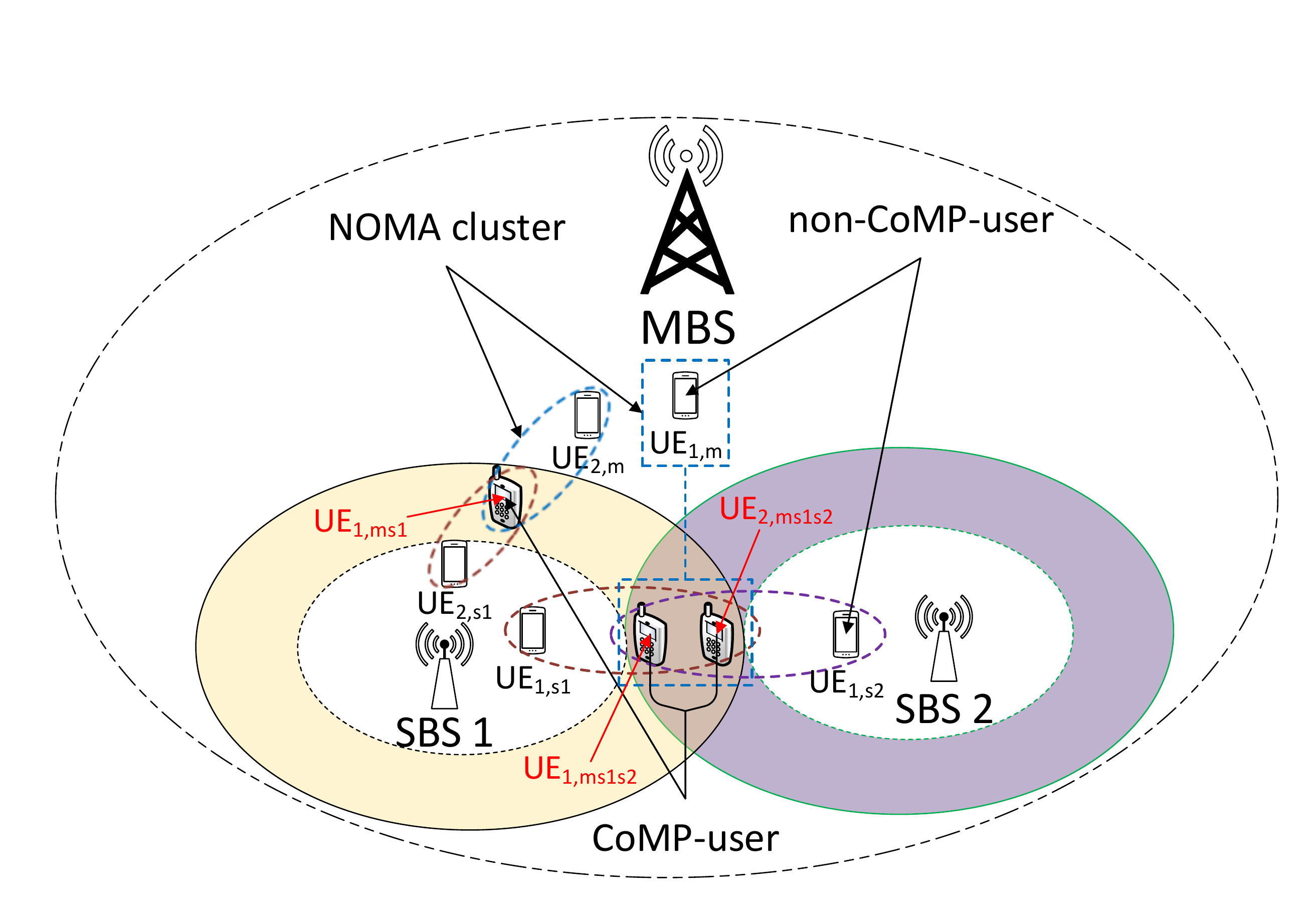}
	\caption{An illustration of the proposed downlink JT-CoMP-NOMA model in a two-tier HetNet.}
	\label{fig:sm}
 \end{center}
\end{figure} 

The key contributions of this paper can be summarized as follows: 

\begin{itemize}
\item Formulate a convex power optimization problem for sum-rate maximization in a single-cell downlink NOMA system under constrained  minimum rate requirements for users in a NOMA cluster and discuss the necessary conditions for global optimality. This serves as a basis to formulate and solve the power allocation problem for CoMP-NOMA in a multi-cell/HetNet scenarios.

\item Provide a novel system and signal model for downlink CoMP-NOMA in a multi-cell network (a two-tier HetNet scenario in particular). We derive the achievable rate formulas for CoMP-UEs and non-CoMP-UEs considering $2$-cell and $3$-cell CoMP-sets. 

\item A joint power optimization problem across multiple cells is formulated for CoMP-NOMA sum-rate maximization considering both $2$-cell and $3$-cell CoMP-sets. The solution of the joint power optimization problem is also provided. 

\item A low-complexity distributed power optimization approach for each CoMP-cell is presented, which avoids the high computational complexity of the joint power optimization problem involving all the CoMP-cells belonging to a CoMP-set.

\item Finally, provide a comprehensive  performance evaluation of the proposed CoMP-NOMA system. Numerical results demonstrate the gain in spectral and energy efficiency due to the proposed CoMP-NOMA model in comparison with traditional CoMP-OMA systems.

\end{itemize}

\subsection{Paper Organization}
The rest of the paper is organized as follows: Section~II presents the power optimization problem formulation and solution for the downlink in a single-cell  NOMA system. Section~III discusses various CoMP schemes and their applicability in the downlink of NOMA-based multi-cell networks. Also, it presents the system model, assumptions and achievable rate formula for the proposed downlink JT-CoMP-NOMA model in a two-tier HetNet. The joint and distributed power optimization approaches  for the proposed downlink JT-CoMP-NOMA model are presented in  Sections~IV and V, respectively. Section~VI provides the necessary conditions to validate the distributed solution for the joint power optimization problem and also provides insights into the energy efficiency performance of CoMP-NOMA system. Section VII evaluates the performance of the proposed solutions numerically. Finally, Section~VIII concludes the paper.

\section{Optimal Power Allocation for Single-cell Downlink NOMA}

Consider a single-cell network (i.e., no ICI) with $M \geq 2$ users having distinct channel gains which form an $M$-user NOMA cluster for downlink transmission. Also, assume that the SIC ordering for this NOMA cluster follows the users' indices, i.e., signal for UE$_1$ is decoded first, signal for UE$_2$ is decoded second, and so on. Thus, UE$_1$ can decode its desired signal by treating all other users' signals as interference (i.e., INUI), while UE$_M$ can decode its desired signal after canceling all INUI signals by applying SIC techniques. In such a downlink NOMA cluster, the achievable rate for any user $i$ can be written as follows: 
\begin{align}
R_i = B \log_2\left[1+\frac{p_i \gamma_i}{\sum\limits_{j = i+1}^{M} p_j\gamma_i+1}\right], \, \forall i = 1,2,\cdots,M
\label{scn_1}
\end{align}
where $\gamma_i$ denotes channel power gain at the receiver with a normalized noise power for user $i$, and $p_i$ denotes the transmission power for user $i$. The radio channel is assumed to be flat fading Rayleigh channel over NOMA bandwidth $B$ for each resource block. 
For successful SIC operation at the receiver end(s), the necessary conditions for power allocation are: 
\begin{align}
p_i \gamma_{k} - \sum\limits_{j=i+1}^{M} p_j\gamma_{k} - 1 \geq \theta, \enspace \substack{\forall i = 1,2,\cdots,M-1 \\  \forall k = i, i+1, \cdots, M}. 
\label{scn_2}
\end{align} 
The term $\theta$ denotes the minimum difference between the decoded signal power and the non-decoded INUI signal plus noise power \cite{msali2016}. This minimum difference is required to decode a NOMA user's signal in the presence of signals (plus noise) for other NOMA users with higher SIC ordering\footnote{User $i$ having a higher SIC order than user $j$ means that user $i$ can cancel INUI from user $j$ before decoding the desired signal, while user $j$ cannot cancel INUI from user $i$.}. To maximize the  sum-rate in a NOMA cluster, the SIC ordering needs to follow the ascending order of the NOMA users' channel gains. That is, the aforementioned SIC ordering will provide maximum sum-rate if the channel gains for the NOMA users are as follows: $\gamma_M > \gamma_{M-1}>\cdots> \gamma_1$. In such an optimal scenario, the SIC constraints in \eqref{scn_2} could be simplified as  \cite{msali2016}
\begin{align}
p_i \gamma_i - \sum\limits_{j=i+1}^{n} p_j \gamma_i - 1\geq \theta, \, \, \forall i = 1,2,\cdots,M-1 \label{scn_2_1}.
\end{align}
It is worth noting that, for an $M$-user NOMA with any particular SIC ordering, the number of SIC constraints will be reduced to exactly $(M-1)$, and it can be simplified by following a procedure similar to [8, eq. (5)].
 
Now, the power optimization problem for sum-rate maximization over unit NOMA bandwidth in an $M$-user downlink NOMA having channel gains $\gamma_M > \gamma_{M-1}>\cdots> \gamma_1$ can be formulated as follows: 
\begin{align} 
\underset{\bm {\mathbf{p}}}{\text{max}} \enspace \sum\limits_{i = 1}^{M}\log_2\left[1+\frac{p_i \gamma_i}{\sum\limits_{j = i+1}^{M} p_j\gamma_i + 1}\right]
\label{scn_3}
\end{align}
subject to: 
\begin{align*} 
&\bm{\mathbf{C1:}}~\sum\limits_{i = 1}^{M}p_i \leq p_t \nonumber \\ 
&\bm{\mathbf{C2:}}~ \log_2\left[1+\frac{p_i \gamma_i}{\sum\limits_{j = i+1}^{M} p_j\gamma_i + 1}\right] \geq R_i, \, \forall i = 1,2,\cdots, M \nonumber\\
&\bm{\mathbf{C3:}}~ p_{i}\gamma_{i+1} - \sum\limits_{j = i+1}^{M}p_{j}\gamma_{i+1} -1 \geq \theta, \, \forall i = 1, 2,\cdots,M-1
\end{align*}
where $\bm{\mathbf{C1}}$ and $\bm{\mathbf{C3}}$ represent NOMA power budget and SIC constraints, respectively, while $\bm{\mathbf{C2}}$ represents the constraint on each NOMA user's rate requirement. In this paper, each NOMA user's minimum rate requirement is considered to be equal to their achievable rate in an OMA system. For example, in the case of equal spectrum and power allocation among $M$ users in an OMA system,  $R_i = \frac{B}{M} \log_2\big(1 + p_t\gamma_i\big), \, \forall i = 1,2, \cdots, M$. 


\begin{lemma}
For an $M$-user ($M\geq 2$) downlink NOMA system with ascending channel gain-based SIC ordering, the sum-rate is a strictly concave function of the transmit powers for NOMA users.
\end{lemma}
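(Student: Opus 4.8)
The plan is to exploit an invertible change of variables that turns the sum-rate into a \emph{separable} function, because in the raw power coordinates the objective of \eqref{scn_3} is a sum of terms that each look like a difference of concave functions and is therefore not manifestly concave. First I would clear the fraction inside each logarithm in \eqref{scn_1}. Introducing the tail sum $S_i = \sum_{j=i}^{M} p_j$ (so that $S_i = p_i + S_{i+1}$ and $S_{M+1}=0$), a short manipulation rewrites the $i$-th summand as
\begin{align}
\log_2\!\left[1+\frac{p_i\gamma_i}{\gamma_i S_{i+1}+1}\right] = \log_2\!\left[1+\gamma_i S_i\right] - \log_2\!\left[1+\gamma_i S_{i+1}\right].
\label{prop_rewrite}
\end{align}
Summing over $i$ and regrouping the terms by which tail sum they contain, the cross terms recombine and the sum-rate collapses to the separable form
\begin{align}
\sum_{i=1}^{M} R_i = \log_2\!\left[1+\gamma_1 S_1\right] + \sum_{k=2}^{M}\Big(\log_2\!\left[1+\gamma_k S_k\right]-\log_2\!\left[1+\gamma_{k-1} S_k\right]\Big),
\label{prop_separable}
\end{align}
i.e.\ a sum of univariate functions $\phi_k(S_k)$, one per variable $S_k$.

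Next I would establish strict concavity of each $\phi_k$. For $k=1$ this is immediate, since $\log_2(1+\gamma_1 S_1)$ is strictly concave for $\gamma_1>0$. For $k\geq 2$ I would compute
\begin{align}
\phi_k''(S_k) = \frac{1}{\ln 2}\left[\frac{\gamma_{k-1}^2}{(1+\gamma_{k-1}S_k)^2}-\frac{\gamma_k^2}{(1+\gamma_k S_k)^2}\right],
\label{prop_secondderiv}
\end{align}
and show it is strictly negative on the feasible region $S_k\geq 0$. The decisive observation --- and the only place the channel ordering enters --- is that $\gamma\mapsto \gamma/(1+\gamma S)$ is strictly increasing in $\gamma$ for every fixed $S\geq 0$, its derivative being $1/(1+\gamma S)^2>0$. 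The ascending-gain SIC ordering gives $\gamma_k>\gamma_{k-1}$, so squaring the resulting inequality yields $\phi_k''(S_k)<0$.

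Finally I would assemble the pieces. Since the objective in \eqref{prop_separable} is separable in $(S_1,\dots,S_M)$ with each $\phi_k$ strictly concave, its Hessian in these variables is diagonal with strictly negative entries, hence negative definite, so the sum-rate is strictly concave in $\mathbf{S}$. The map $\mathbf{S}=L\mathbf{p}$ with $L_{ij}=1$ for $j\geq i$ (and $0$ otherwise) is linear and invertible (upper-triangular with unit determinant), and strict concavity is preserved under invertible affine maps because the Hessian in $\mathbf{p}$ equals $L^{\top}\operatorname{diag}(\phi_k'')\,L \prec 0$. This delivers strict concavity of the sum-rate in the transmit powers $\mathbf{p}$.

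I expect the main obstacle to be spotting the right change of variables rather than any single calculation: in the original $\mathbf{p}$-coordinates the objective is a genuine difference of concave terms, and a direct $M\times M$ Hessian computation is cumbersome and not obviously sign-definite. The tail-sum substitution is what simultaneously separates the variables and distills the whole question down to the single scalar inequality $\gamma_k>\gamma_{k-1}$, which the ascending channel-gain ordering supplies exactly.
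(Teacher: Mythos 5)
Your proof is correct, and it takes a genuinely different route from the paper. The paper attacks the Hessian directly in the power coordinates $\mathbf{p}$: it writes out $\bm{\mathbf{H}}^{(2)}$ and $\bm{\mathbf{H}}^{(3)}$ explicitly, verifies via Sylvester's criterion that the leading principal minors alternate in sign starting negative (e.g., $\det|\bm{\mathbf{H}}_2^{(2)}| = \pi_2(\phi_2-\pi_2^\prime)>0$ using $\gamma_2>\gamma_1$), and then asserts the general-$M$ pattern of minors by induction without carrying out the inductive step. Your telescoping substitution $S_i=\sum_{j=i}^{M}p_j$ is in fact the structural explanation of why the paper's computation works: the paper's Hessian is exactly $L^{\top}\operatorname{diag}(\phi_1'',\dots,\phi_M'')L$ in disguise, and its minors factor into products such as $\pi_M(\phi_M-\pi_M^\prime)(\psi_M-\phi_M^\prime)\cdots$, where each parenthesized factor is (up to sign and the $\ln 2$ constant) one of your $\phi_k''$ terms, with $\pi_M^\prime,\phi_M,\dots$ being precisely the squared derivatives $\bigl(\gamma/(1+\gamma S_k)\bigr)^2$ evaluated at the tail sums. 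What your approach buys is that the general-$M$ case becomes immediate and fully rigorous --- the separable form plus the monotonicity of $\gamma\mapsto\gamma/(1+\gamma S)$ replaces the paper's ``can be easily expanded'' induction sketch, and it isolates the single inequality $\gamma_k>\gamma_{k-1}$ as the whole content of the lemma; what the paper's direct computation buys is the explicit minor formulas, which make visible exactly which ordering violations destroy concavity. The only points worth stating explicitly in a polished write-up are that $L$ is unit upper-triangular and hence invertible, so $\mathbf{x}^{\top}L^{\top}DL\,\mathbf{x}=(L\mathbf{x})^{\top}D(L\mathbf{x})<0$ for $\mathbf{x}\neq\mathbf{0}$, and that the squaring step is legitimate because both sides of $\gamma_k/(1+\gamma_k S)>\gamma_{k-1}/(1+\gamma_{k-1}S)$ are positive --- both of which you already note.
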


\begin{proof}
See \textbf{Appendix A}.
\end{proof}

\begin{lemma}
All the constraints in downlink NOMA sum-rate maximization problem formulated in \eqref{scn_3} are convex/concave function of the transmit powers for NOMA users.
\end{lemma}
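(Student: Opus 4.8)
The plan is to verify the claim one constraint at a time, since the three constraints in \eqref{scn_3} have very different structure. Constraints \textbf{C1} and \textbf{C3} will be immediate, whereas \textbf{C2} carries all the difficulty and will require a reformulation rather than a direct second-order test; throughout, by a ``convex constraint'' I mean one whose feasible region is a convex set.

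First I would dispose of \textbf{C1} and \textbf{C3}. The power-budget constraint \textbf{C1} reads $\sum_{i=1}^{M} p_i - p_t \le 0$, which is an affine function of the power vector $\mathbf{p}$ and is therefore simultaneously convex and concave; its sublevel set is a half-space and hence convex. Likewise, each SIC constraint in \textbf{C3}, namely $p_i \gamma_{i+1} - \gamma_{i+1}\sum_{j=i+1}^{M} p_j - 1 - \theta \ge 0$, is affine in $\mathbf{p}$, since the channel gains $\gamma_{i+1}$ and the constants $1,\theta$ are fixed; it too is both convex and concave and defines a half-space. Nothing beyond reading off linearity is needed here.

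The substantive case is the rate constraint \textbf{C2}. The first step is to collapse the SINR term into a single logarithm, writing its left-hand side as
\begin{align}
\log_2\!\left[1+\frac{p_i \gamma_i}{\gamma_i\sum_{j=i+1}^{M} p_j + 1}\right] = \log_2\!\Big[\gamma_i\!\sum_{j=i}^{M} p_j + 1\Big] - \log_2\!\Big[\gamma_i\!\sum_{j=i+1}^{M} p_j + 1\Big].
\label{plan_c2}
\end{align}
The tempting route---to show this rate is concave, so that $R_i$ minus it is convex---does not work: the expression in \eqref{plan_c2} is a difference of two concave logarithms of affine arguments, and a direct Hessian check (already with two active powers) yields an indefinite matrix, so the per-user rate is \emph{neither} convex nor concave. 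I therefore abandon the second-order approach and instead exploit monotonicity. Applying the strictly increasing map $x \mapsto 2^{x}$ to both sides of \textbf{C2} preserves the inequality, and multiplying through by the strictly positive denominator $\gamma_i\sum_{j=i+1}^{M} p_j + 1$ clears the logarithm, yielding the equivalent condition
\begin{align}
p_i \gamma_i - (2^{R_i}-1)\,\gamma_i\!\sum_{j=i+1}^{M} p_j \;\ge\; 2^{R_i}-1 .
\label{plan_c2_lin}
\end{align}
Since $R_i$ is a fixed minimum-rate target, \eqref{plan_c2_lin} is affine in $\mathbf{p}$, so \textbf{C2} is equivalent to an affine inequality and hence describes a convex feasible set.

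The only genuine obstacle, then, is the non-concavity of the individual NOMA rate exhibited in \eqref{plan_c2}; the point of the argument is that this does no harm, because \textbf{C2} may be replaced by the equivalent affine form \eqref{plan_c2_lin} without altering the feasible region. Combining the three cases, every constraint in \eqref{scn_3} yields a convex feasible set, which together with the strict concavity of the objective established in Lemma~1 makes the sum-rate maximization a convex program.
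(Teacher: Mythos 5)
Your proposal is correct and follows essentially the same route as the paper: \textbf{C1} and \textbf{C3} are read off as affine, and \textbf{C2} is converted, via exponentiation and clearing the positive denominator, into the affine inequality $p_i\gamma_i - (2^{R_i}-1)\gamma_i\sum_{j=i+1}^{M}p_j - (2^{R_i}-1) \ge 0$, which is precisely the paper's reformulation with $\zeta = 2^{R_i}-1$. Your added observation that the logarithmic form of the per-user rate is itself neither convex nor concave --- so the lemma must be understood as asserting convexity of the feasible set via the equivalent affine form --- is a useful clarification the paper leaves implicit, but it does not change the argument.
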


\begin{proof}
For an $M$-user downlink NOMA sum-rate maximization problem, there are $2M$ constraints: one constraint for the power budget  ($\bm{\mathbf{C1}}$), $M-1$ constraints for the SIC  ($\bm{\mathbf{C3}}$), and $M$ other constraints for the rate requirements of $M$ NOMA users ($\bm{\mathbf{C2}}$). In the optimization problem in \eqref{scn_3},  the power budget constraint $\bm{\mathbf{C1}}$ and the SIC  constraints $\bm{\mathbf{C3}}$ all are affine functions of NOMA users' power allocation, and thus they are convex/concave. In the following, we prove that the constraints $\bm{\mathbf{C2}}$ are also convex/concave function of NOMA users' power allocation. 

Consider the  rate constraint for user $i$ as follows: 
\begin{align}
\log_2\left[1+\frac{p_i \gamma_i}{\sum\limits_{j = i+1}^{M} p_j\gamma_i + 1}\right] \geq R_i, \, \forall i = 1,2,\cdots, M.
\label{L2_1}
\end{align}
By letting $\zeta = (2^{R_i} -1)$,  \eqref{L2_1} can be written as 
\begin{align}
p_i \gamma_i - \zeta\sum\limits_{j = i+1}^{M} p_j\gamma_i - \zeta \geq 0
\label{L2_2}
\end{align}
which is an affine function of $\bm{\mathbf{p}} = [p_1, p_2, \cdots, p_M]^{T}$. Because all NOMA users' rate requirements follow a similar form, then they are all affine functions, i.e., convex/concave function of $\bm{\mathbf{p}}$.
\end{proof}

According to \textbf{Lemma 1} and \textbf{Lemma 2}, the sum-rate maximization problem in \eqref{scn_3} is a convex optimization problem. Also, \textbf{Lemma 2} indicates that all the inequality constraints are affine, and therefore, the Slater's condition holds [25, p. 227].    
Therefore, according to Slater's theorem, the \textit{KKT conditions are both necessary and sufficient for the optimal solution to the problem formulated in \eqref{scn_3}}. Hence, the closed-form solutions for the downlink NOMA proposed in \cite{msali2016} will be the global optimal solution for the problem in \eqref{scn_3}.

\section{Downlink CoMP-NOMA in a Two-tier HetNet}
This section discusses the application of various CoMP schemes in downlink NOMA and provide power allocation strategy for each COMP-NOMA model for downlink transmission in a two-tier HetNet. Then, the focus will be on the joint transmission  CoMP-NOMA (JT-CoMP-NOMA) downlink  model and present the achievable rate formulas. Although the analysis is carried out  for HetNets, it is also valid for homogeneous networks. 

\subsection{CoMP-NOMA in Multi-cell Downlink Networks}

In \cite{msali_comp_mag}, different CoMP-NOMA models for multi-cell networks are discussed. A key observation from \cite{msali_comp_mag} is that the power allocation for coordinated scheduling CoMP-NOMA (CS-CoMP-NOMA) system in downlink transmission is exactly the same as the power allocation in single-cell NOMA, when the NOMA clusters in different CoMP-cells, which belong to a CoMP-set, utilize different resource blocks. In downlink CS-CoMP-NOMA, a CoMP-UE receives transmission from only one CoMP-BS in a channel which is orthogonal to other channels where the other CoMP users receive transmissions from other CoMP-BSs belonging to the same CoMP-set. Thus, under downlink CS-CoMP-NOMA transmission, the NOMA clusters do not experience ICI, and hence, power allocation in downlink CS-CoMP-NOMA is similar to that in a single-cell NOMA system as presented in Section II. 

The downlink transmission for dynamic point selection (DPS)-CoMP scheme is similar to that under CS-CoMP scheme except that the former requires the CoMP UEs' data to be available at each CoMP-BS. Both schemes select one CoMP-BS to transmit to a CoMP-UE at a time instant while other CoMP-BSs belonging to the same CoMP-set does not use that spectrum \cite{3GPP2011}.  Thus, at each scheduling time interval, the power allocation for downlink DPS-CoMP NOMA is also same as the power allocation for single-cell NOMA. 
It is also noted that, under CS-CoMP/DPS-CoMP scheme, a CoMP-BS which does not transmit to a CoMP-UE, may use the same spectrum to serve a cell-center user with a low power to minimize ICI.
 Due to the possible infeasibility of MIMO precoding and/or decoding,  coordinated beamforming (CB)-CoMP scheme is usually not applicable to downlink NOMA systems~\cite{msali_comp_mag}. 
 
On the contrary to the other CoMP schemes, joint transmission (JT)-CoMP scheme enables multiple CoMP-BSs to transmit to a single CoMP-user simultaneously, and thus the power allocation in JT-CoMP-NOMA is different from that in single-cell NOMA.  In the following sections, we will focus on the dynamic power allocation for JT-CoMP-NOMA in multi-cell downlink HetNets scenarios.

\subsection{JT-CoMP NOMA System Model}
Consider a HetNet  consisting of a single high power eNB underlaid by $X$ low transmit power SBSs from the set $\mathcal{X} = \{1,2,\cdots,X\}$. For downlink transmission in this HetNet, each BS uses NOMA to schedule its users, while JT-CoMP transmission by multiple CoMP-BSs is applied to serve ICI-prone CoMP-users. We also assume that the number of coordinating BSs is no more than three. Therefore, the number of $2$-BS CoMP-set $Y = X$ and the number of $3$-BS CoMP-set $Z \leq$ $X \choose 2$. It is also assumed that $\mathcal{Y} = \{1,2,\cdots,Y\}$ is the set of $2$-cell CoMP-set, and $\mathcal{Z} = \{1,2,\cdots,Z\}$ is the set of three-cell CoMP-set. 

Based on the received SINR, users are categorized into the following sets: set of  non-CoMP-UEs served by the eNB, denoted as $\mathcal{U_M} = \{1,2,\cdots,U_M\}$; set of  non-CoMP-UEs served by the $x$-th SBS, denoted as $\mathcal{U_{S_X}} = \{1,2,\cdots,U_{S_x}\}$; set of CoMP-UEs under a $2$-BS CoMP-set formed by the eNB and the $x$-th SBS, denoted as $\mathcal{U_{MS_X}} = \{1,2,\cdots,U_{MS_x}\}$; and  set of CoMP-UEs under a $3$-BS CoMP-set formed by the eNB, $x$-th SBS, and $x^\prime$-th SBS, denoted as $\mathcal{U_{MS_XS_{X^\prime}}} = \{1,2,\cdots,U_{MS_xS_{x^\prime}}\}, \forall x \neq x^\prime$.

Under this JT-CoMP-NOMA model, a CoMP-UE  receives multiple transmissions from the CoMP-BSs belonging to a CoMP-set and forms distinct NOMA clusters with non-CoMP-UEs served by each of these CoMP-BSs of the considered CoMP-set. On the other hand, a non-CoMP-UE receives her desired signal only from the BS it is associated with, which can be a member of only one NOMA cluster. 
It is worth noting that each CoMP-UE simultaneously forms different NOMA clusters at multiple CoMP-cells belonging to a CoMP-set, and hence, its SIC ordering at each NOMA cluster should follow \textbf{Lemma~3}.

\begin{lemma}
The SIC ordering for a CoMP-user will be same in all NOMA clusters formed at each CoMP-cell of a CoMP-set.
\end{lemma}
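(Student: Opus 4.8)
The plan is to prove \textbf{Lemma~3} by contradiction, building on the ordering rule established in Section~II, namely that the sum-rate-optimal SIC order inside any NOMA cluster is the ascending order of its members' channel power gains (members are decoded from smallest $\gamma$ to largest). The central idea is that, under JT-CoMP, a CoMP-user has a single receiver that combines the desired-signal contributions arriving from every CoMP-BS of its CoMP-set; hence each CoMP-user is characterized by one well-defined post-combining effective channel gain, and this quantity does not depend on which of the clusters we momentarily regard the user as belonging to.

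First I would fix a CoMP-set and two of its CoMP-cells, say BS $b$ and BS $b'$. By the model of Section~III, the NOMA cluster hosted at each of these cells contains that cell's (exclusive) non-CoMP-users together with the CoMP-users of the set, so every CoMP-user of the set appears simultaneously in the cluster at $b$ and in the cluster at $b'$. I would then record, from Section~II, that the SIC position of any member within a cluster is fixed solely by comparing its effective channel gain against those of the other members of that cluster.

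The core step is the contradiction. Suppose the claim fails, i.e., some CoMP-user $c$ occupies different SIC positions in the clusters at $b$ and $b'$. Since the relative SIC order of two members is dictated by the sign of the difference of their effective gains, any change in $c$'s position must come from an order flip relative to a member common to both clusters; the only members common to both clusters are the CoMP-users of the set, because each non-CoMP-user belongs to exactly one cluster. Thus there must exist a CoMP-user $c'$ with $\gamma_c > \gamma_{c'}$ forced by the ordering in cluster $b$ and $\gamma_c < \gamma_{c'}$ forced by the ordering in cluster $b'$. Because each of $\gamma_c$ and $\gamma_{c'}$ is the single, cluster-independent effective gain of the respective CoMP-user, these two inequalities cannot hold together---a contradiction. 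Hence $c$ retains the same SIC order in every cluster of the set. I would close by remarking that the per-cell non-CoMP-users cannot spoil this conclusion: each lives in a single cluster, so they may shift a CoMP-user's absolute decoding index but can never re-order it against another member present in both clusters.

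I expect the main obstacle to be making the notion of a CoMP-user's effective channel gain under joint transmission fully rigorous---i.e., reducing the combined multi-link reception to a single scalar that legitimately plays the role of $\gamma$ in the ascending-order rule---and verifying that this scalar is genuinely the same object when the user is viewed inside different clusters. Once that single-gain property is secured, the order-flip contradiction is immediate, and the remaining work is only the bookkeeping that separates the shared CoMP-users, which enforce cross-cluster consistency, from the per-cell non-CoMP-users, which are irrelevant to it.
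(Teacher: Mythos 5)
There is a genuine gap, and it lies in your central premise. You assume that the SIC position of a cluster member is ``fixed solely by comparing its effective channel gain'' and that each CoMP-user possesses a single, cluster-independent post-combining gain. Both halves fail in this model. First, the paper explicitly disavows the ascending-gain rule for CoMP-UEs: Section~III-C states that the CoMP-UEs' subscripts ``may not follow the ascending order of their channel gains with any particular CoMP-BS,'' Section~IV requires the joint problem to be solved over \emph{all possible} SIC orderings of the CoMP-UEs, and Section~VI notes that ascending-gain ordering generally cannot hold at all CoMP-BSs simultaneously (a CoMP-user has distinct per-BS gains $\gamma_k^{(m)}$ and $\gamma_k^{(s_x)}$, and the gain-based orders at the two cells can genuinely conflict --- e.g., CoMP-UE $A$ nearer the eNB, CoMP-UE $B$ nearer the SBS). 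So the ordering is a design variable, not a function of gains, and your order-flip contradiction has no premise to run on. Second, the only candidate for a ``post-combining effective gain'' is something like $\bm{\mathbf{p}}_{k}^T \bm{\mathbf{\gamma}}_{k} = p_k^{(m)}\gamma_k^{(m)} + p_k^{(s_x)}\gamma_k^{(s_x)}$, which depends on the transmit powers --- but the powers are the \emph{output} of an optimization that must already fix the SIC ordering as input, so defining the ordering through this scalar is circular; no power-independent single scalar exists.

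The reason the lemma actually holds is SIC feasibility at the single receiver under joint transmission, not gain comparison. Each CoMP-UE's message is one message, jointly transmitted by every CoMP-BS of the set on the same resource block; a given CoMP-UE receives one composite waveform and, for any other CoMP-UE's message, must either decode and cancel it \emph{in its entirety} (all BS components at once, which is exactly why constraint $\bm{\mathbf{C6}}$ measures the decoded power as the combined term $\bm{\mathbf{p}}_{k^\prime}^T \bm{\mathbf{\gamma}}_{l}$) or treat it entirely as interference. A split treatment --- cancelling the component arriving via BS $b$ while treating the component of the \emph{same} message arriving via BS $b'$ as interference --- is physically impossible, which forces one common decoding order for the CoMP-UEs across all clusters of the CoMP-set. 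Note also that the paper itself gives no inline argument (its ``proof'' is a citation to [24, p.~3]), so the standard to meet is this feasibility argument, which your write-up gestures at in the phrase ``single receiver that combines the desired-signal contributions'' but then abandons in favor of the unsupported gain-ordering mechanism.
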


\begin{proof}
See [24, p. 3].
\end{proof}

In the proposed CoMP-NOMA model, each NOMA cluster must include at least one non-CoMP-UE and the SIC ordering for the non-CoMP-UEs will be followed by the SIC ordering for the CoMP-UEs, i.e., each non-CoMP-UE can cancel INUI due to the CoMP-UEs while none of the CoMP-UE can cancel INUI due to the non-CoMP-UEs. 
This  SIC ordering for CoMP-UEs is due to their locations at the cell-edge area (and hence low channel gain). Note that  this SIC ordering also  reduces the overhead of SIC significantly when compared to the opposite SIC ordering \cite{msali_comp_mag}. 
In addition, we adopt the downlink NOMA system proposed in \cite{msali2016} where the users are initially grouped into clusters and then dynamic power allocation is performed for each NOMA cluster.  

For the sake of simplicity, we assume that all CoMP-UEs in a particular CoMP-set will be grouped into a single NOMA cluster formed at each CoMP-cell. Also, the minimum number of NOMA clusters in a particular cell/BS is assumed to be equal to the number of CoMP-sets in which the considered BS is a member. 

 
In the following, the dynamic power allocation method is presented for our proposed CoMP-NOMA model considering a particular CoMP-set.
 
\subsection{Achievable Data Rate in JT-CoMP-NOMA System}

Consider a particular $2$-BS CoMP-set $y\in \mathcal{Y}$, where the number of non-CoMP-UEs in the MBS and SBS  $x$ are $\Phi_{y,m}$ and $\Phi_{y,s_x}$, respectively, and the number of CoMP-UEs is $\Phi_{y,ms_x}$. 
Within this $y$-th $2$-BS CoMP-set, the set of non-CoMP-UEs $i \in \{1,\cdots,\Phi_{y,m} \}$ in the MBS, the set of  non-CoMP-UEs  $j \in \{1,\cdots,\Phi_{y,s_x} \}$ in the $x$-th SBS, and the set of CoMP-UEs $k \in \{1,\cdots,\Phi_{y,ms_x} \}$ all are assumed to follow SIC ordering according to their subscripts. For the non-CoMP-UEs at each cell, the subscripts follow the ascending order of their channel gains with their respective serving BS. On the other hand,  for the CoMP-UEs, their subscripts may not follow  the ascending order of their channel gains with any particular CoMP-BS. 

Under the above assumptions, the achievable rate (over unit resource block) for  non-CoMP-UE $i$, which is in CoMP-set $y$ and served by the eNB, can be expressed as
\begin{align}
R_i =\log_2\left[1+\frac{p_i^{(m)} \gamma_i^{(m)}}{ \sum\limits_{i^\prime = i+1}^{\Phi_{y,m}} p_{i^\prime}^{(m)} \gamma_i^{(m)} + \norm{\bm{\mathbf{p}}_{j}^{(s_x)}}_1 \gamma_i^{(s_x)} + 1}\right]
\label{mcn_1}
\end{align}
where $\gamma_i^{(m)}$ and $\gamma_i^{(s_x)}$ represent $i$-th non-CoMP-UE's channel gain with eNB (desired channel) and SBS $x$ (ICI channel), respectively. 
The term $p_i^{(m)}$ denotes the downlink transmit power allocated by the eNB for the non-CoMP-UE $i$ (i.e., transmit power for the desired signal), $\sum_{i^\prime = i+1}^{\Phi_{y,m}} p_{i^\prime}^{(n)}$ represents the downlink transmit power for other non-CoMP-UEs served by the  eNB who are in the same NOMA cluster as user $i$ but has higher SIC ordering than user $i$ 
(i.e., transmit power corresponding to INUI signal).
Also, $\norm{\bm{\mathbf{p}}_{j}^{(s_x)}}_1 = \sum_{j = 1}^{\Phi_{y,s_x}} p_j$ represents the downlink transmit power for non-CoMP-UEs in SBS $x$ which form different NOMA clusters with the common CoMP-UEs in the same NOMA cluster as user $i$ (i.e., transmit power for ICI signal).

Over the same unit resource block used in \eqref{mcn_1}, the achievable rate for a non-CoMP-UE $j$ in SBS $x$ can be expressed as
\begin{align}
R_{j} =\log_2\left[1+\frac{p_{j}^{(s_x)} \gamma_{j}^{(s_x)}}{ \sum\limits_{j^\prime = j+1}^{\Phi_{y,s_x}} p_{j^\prime}^{(s_x)} \gamma_{j}^{(s_x)} + \norm{\bm{\mathbf{p}}_{i}^{(m)}}_1 \gamma_{j}^{(m)} + 1}\right]
\label{mcn_2}
\end{align}
where $\gamma_j^{(s_x)}$ and $\gamma_j^{(m)}$ represent the $j$-th non-CoMP-UE's channel gain with SBS $x$ (desired channel) and eNB (ICI channel), respectively. The term $p_j^{(s_x)}$ and $\sum_{j^\prime = j+1}^{\Phi_{y,s_x}} p_{j^\prime}^{(s_x)}$ are similarly defined as for \eqref{mcn_1}, but for  non-CoMP-UE $j$ in SBS $x$ in this case. Here, $\norm{\bm{\mathbf{p}}_{i}^{(m)}}_1 = \sum_{i = 1}^{\Phi_{y,m}} p_i$ represents the downlink transmit power for  non-CoMP-UEs $i$ served by the MBS, which form different NOMA clusters with the common CoMP-UEs in the same cluster as user $j$ (i.e., transmit power for ICI signal).

Also, over the same unit resource block used in \eqref{mcn_1} and \eqref{mcn_2}, the achievable rate for a CoMP-UE $k$ can be expressed as
\begin{align}
&R_{k} = \log_2\left[1+\frac{\bm{\mathbf{p}}_{k}^T \bm{\mathbf{\gamma}}_{k}}{ \sum\limits_{k^\prime = k+1}^{\Phi_{y,ms_x}} \bm{\mathbf{p}}_{k^\prime}^T \bm{\mathbf{\gamma}}_{k}+ \norm{\bm{\mathbf{p}}_{i}^{(m)}}_1 \gamma_{k}^{(m)}+ \norm{\bm{\mathbf{p}}_{j}^{(s_x)}}_1 \gamma_{k}^{(s_x)} + 1}\right]
\label{mcn_3}
\end{align}
where $\gamma_k^{(m)}$ and $\gamma_k^{(s_k)}$ represent $k$-th CoMP-UE's channel gain with eNB and SBS $x$, respectively, and CoMP-UE $k$ receives desired signal over both channels. 
The term $\bm{\mathbf{p}}_{k^\prime}^T \bm{\mathbf{\gamma}}_{k} = p_{k}^{(m)} \gamma_{k}^{(m)} + p_{k}^{(s_x)} \gamma_{k}^{(s_x)}$ represents the desired signal for CoMP-UE $k$ jointly transmitted from both CoMP-BSs (eNB and SBS $x$), where $\bm{\mathbf{p}}_{k} = [p_{k}^{(m)}, p_{k}^{(s_x)}]^T$, $ \bm{\mathbf{\gamma}}_{k} = [\gamma_{k}^{(m)}, \gamma_{k}^{(s_x)}]^T$ and $\bm{\mathbf{p}}_{k}^T$ indicates the transpose of $\bm{\mathbf{p}}_{k}$. The term $\sum_{k^\prime = k+1}^{\Phi_{y,ms_x}} \norm{p_{k^\prime}\gamma_{k}}_1$ represents INUI due to other CoMP-UEs of CoMP-set $y$ which form NOMA cluster with CoMP-UE $k$ but have higher SIC ordering than UE $k$. The  terms $\norm{\bm{\mathbf{p}}_{j}^{(s_x)}}_1$ and $\norm{\bm{\mathbf{p}}_{i}^{(m)}}_1$ are similarly defined as in \eqref{mcn_1} and \eqref{mcn_2}, respectively, but act as INUI from the non-CoMP-UEs of SBS $x$ and MBS, respectively, which form different NOMA clusters with CoMP-UE $k$. Note that a CoMP-UE experiences INUI from both the CoMP-BSs due to its inclusion in both the NOMA clusters and having a lower SIC ordering than the non-CoMP-UEs.

\section{Sum-Rate Maximization in JT-CoMP NOMA: Joint Power Optimization Approach} 
Consider the same $2$-BS CoMP-set $y$ that are considered in Section III. Therefore, the number of non-CoMP-UEs served by the eNB, non-CoMP-UEs in the $x$-th SBS,  and CoMP-UEs in CoMP-set $y$ are $\Phi_{y,m}$, $\Phi_{y,s_x}$ and $\Phi_{y,ms_x}$, respectively. Also, assume that $p_t^{(m)}$ and $p_t^{(s_x)}$ are the NOMA power budget at the eNB and the $x$-th SBS, respectively, for CoMP-set $y$. For CoMP-set $y$, if $R_{i}^\prime$, $R_{j}^\prime$, and $R_{k}^\prime$ are the rate requirements for non-CoMP-UE $i$ served by the eNB,  non-CoMP-UE $j$ in the $x$-th SBS, and common CoMP-UE $k$, respectively, then the sum-rate maximization problem for the $2$-BS CoMP-set $y$ can be formulated as follows\footnote{Note that, similar formulations can be done for higher order CoMP-sets such as for 3-BS CoMP-sets.}: 

\begin{align} 
\underset{\bm{\mathbf{p}}^{(m)}, \bm{\mathbf{p}}^{(s_x)}}{\text{max}}~ \sum\limits_{i = 1}^{\Phi_{y,m}}R_i + \sum\limits_{j = 1}^{\Phi_{y,s_x}}R_j + \sum\limits_{k = 1}^{\Phi_{y,ms_x}}R_k
\label{mcn_op_1}
\end{align}
subject to: 
\begin{align*} 
&\bm{\mathbf{C1:}}~\sum\limits_{i = 1}^{\Phi_{y,m}}p_{i}^{(m)} + \sum\limits_{k = 1}^{\Phi_{y,ms_x}}p_{k}^{(m)} \leq p_t^{(m)} \nonumber \\ 
&\bm{\mathbf{C2:}}~\sum\limits_{j = 1}^{\Phi_{y,s_x}}p_{j}^{(s_x)} + \sum\limits_{k = 1}^{\Phi_{y,ms_x}}p_{k}^{(s_x)} \leq p_t^{(s_x)} \nonumber \\ 
&\bm{\mathbf{C3:}}~ R_k \geq R_k^\prime, \quad \forall k = 1,\cdots, \Phi_{y,ms_x} \nonumber \\
&\bm{\mathbf{C4:}}~ R_i \geq R_i^\prime, \quad \,\,\forall i = 1,\cdots, \Phi_{y,m} \nonumber\\
&\bm{\mathbf{C5:}}~ R_j \geq R_j^\prime, \quad \,\forall j = 1,\cdots, \Phi_{y,s_x} \nonumber\\
&\bm{\mathbf{C6:}}~ \bm{\mathbf{p}}_{k}^T \bm{\mathbf{\gamma}}_{l} - \sum\limits_{k^\prime = k+1}^{\Phi_{y,ms_x}}\bm{\mathbf{p}}_{k^\prime}^T \bm{\mathbf{\gamma}}_{l} - \norm{\bm{\mathbf{p}}_i^{(m)}}_1 \gamma_{l}^{(m)} - \, \norm{\bm{\mathbf{p}}_j^{(s_x)}}_1 \gamma_{l}^{(s_x)} - 1 \geq \theta,  \,\Big\{\substack{\forall k = 1,\cdots,(\Phi_{y,ms_x}-1) \\  \forall l = k, \cdots, \Phi_{y,ms_x} \quad\enspace\,\,} \nonumber \\ 
\end{align*}
\begin{align*}
&\bm{\mathbf{C7:}}~ p_{i}^{(m)}\gamma_{i+1}^{(m)} - \sum\limits_{i^\prime = i+1}^{\Phi_{y,m}}p_{i^\prime}^{(m)}\gamma_{i+1}^{(m)} - \norm{\bm{\mathbf{p}}_j^{(s_x)}}_1 \gamma_{i+1}^{(s_x)} - 1 \geq \theta, \, \forall i = 1,\cdots,\left(\Phi_{y,m}-1\right)  \\
&\bm{\mathbf{C8:}}~ p_{j}^{(s_x)}\gamma_{j+1}^{(s_x)} - \sum\limits_{j^\prime = j+1}^{\Phi_{y,s_x}}p_{j^\prime}^{(s_x)}\gamma_{i+1}^{(s_x)} - \norm{\bm{\mathbf{p}}_i^{(m)}}_1 \gamma_{j+1}^{(m)} - 1 \geq \theta, \,  \forall j = 1, \cdots,\left(\Phi_{y,s_x}-1\right)  
\end{align*}
where $R_i$, $R_j$ and $R_k$ are defined in \eqref{mcn_1}, \eqref{mcn_2} and \eqref{mcn_3}, respectively, Also, the vector $\bm{\mathbf{p}}_{k}$, $\bm{\mathbf{p}}_{k^\prime}$, $\bm{\mathbf{p}}_{i}^{(m)}$, $\bm{\mathbf{p}}_{j}^{(s_x)}$ and $\bm{\mathbf{\gamma}}_{l}$ are defined similarly as in \eqref{mcn_3}. 
The optimization variable vector $\bm{\mathbf{p}}^{(m)}$ and $\bm{\mathbf{p}}^{(s_x)}$ are defined as $\bm{\mathbf{p}}^{(m)} = [p_1^{(m)}, \cdots, p_{\Phi_{y,m}}^{(m)}, p_{\Phi_{y,m}+1}^{(m)}, \cdots, p_{\Phi_{y,m}+\Phi_{y,ms_x}}^{(m)}]^T$ and $\bm{\mathbf{p}}^{(s_x)} = [p_1^{(s_x)}, \cdots, p_{\Phi_{y,s_x}}^{(s_x)}, p_{\Phi_{y,s_x}+1}^{(s_x)}, \cdots, p_{\Phi_{y,s_x}+\Phi_{y,ms_x}}^{(s_x)}]^T$.
Constraints $\bm{\mathbf{C1}}$ and $\bm{\mathbf{C2}}$ represent the power budget for NOMA cluster at eNB and SBS $x$, respectively, under CoMP-set $y$. Constraints $\bm{\mathbf{C3}}$, $\bm{\mathbf{C4}}$, and $\bm{\mathbf{C5}}$ represent the individual rate requirements for CoMP-UEs,  non-CoMP-UEs served by the eNB, and  non-CoMP-UEs served by the SBS, respectively, which form NOMA cluster(s) in CoMP-set $y$. The SIC application requirements for CoMP-UEs and  non-CoMP-UEs in CoMP-set $y$ are represented by constraints $\bm{\mathbf{C6}}$, $\bm{\mathbf{C7}}$, and $\bm{\mathbf{C8}}$, respectively. Since SIC ordering for the non-CoMP-UEs follows their ascending channel gain order, we use SIC constraint \eqref{scn_2_1} for non-CoMP-UEs. On the other hand, the SIC constraint in \eqref{scn_2} is used for CoMP-UEs as they may not follow ascending channel gain order at both the CoMP-BSs simultaneously. 


The sum-rate maximization problem for JT-CoMP-NOMA formulated in \eqref{mcn_op_1} is a joint optimization problem among the CoMP-BSs. Each CoMP-BS needs to solve problem \eqref{mcn_op_1} by considering all the possible solutions for other CoMP-BSs of the considered CoMP-set. That is, for each feasible power allocation solution at SBS $x$, the eNB maximizes \eqref{mcn_op_1} by optimizing $\bm{\mathbf{p}}^{(m)}$ under constraints $\bm{\mathbf{C1}},\bm{\mathbf{C3}},\bm{\mathbf{C4}},\bm{\mathbf{C6}}$, and $\bm{\mathbf{C7}}$. For each feasible $\bm{\mathbf{p}}^{(s_x)}$, i.e., constant $\bm{\mathbf{p}}^{(s_x)}$, the optimization of $\bm{\mathbf{p}}^{(m)}$ in \eqref{mcn_op_1} is similar to the optimization of $\bm{\mathbf{p}}$ in \eqref{scn_3}, and thus the solution\footnote{This solution could be a global or local maxima depending on the SIC ordering of non-CoMP-UEs and CoMP-UEs.} can be obtained by using the KKT conditions \cite{msali2016}. 
On the other hand, SBS $x$ maximizes \eqref{mcn_op_1} by optimizing $\bm{\mathbf{p}}^{(s_x)}$ under constraints $\bm{\mathbf{C2}},\bm{\mathbf{C3}},\bm{\mathbf{C5}},\bm{\mathbf{C6}}$, and $\bm{\mathbf{C8}}$, for each feasible solution at the MBS. In addition, since the CoMP-UEs' SIC ordering may not follow their ascending channel gain order, the joint optimization should be solved for all possible SIC ordering for the CoMP-UEs. 
Therefore, solving the joint power optimization problem \eqref{mcn_op_1} to obtain the optimal solution will incur substantial computational complexity. To reduce the complexity, in the following section, we will present a method for distributed power optimization at each CoMP-BS which is independent of the power allocations at other CoMP-BSs.

\section{Distributed Power Allocation in JT-CoMP NOMA}
This section provides a suboptimal solution for the optimization problem \eqref{mcn_op_1} by developing another optimization problem that can be solved distributively at each CoMP-BS without considering power allocations at the other CoMP-BSs. In the proposed solution, each CoMP-BS optimizes its power allocation for the NOMA cluster without considering ICI and the impact of CoMP transmission. However, note that, due to JT-CoMP transmission, all of the  CoMP-BSs simultaneously transmit the same message signal to the CoMP-users by using the power allocation solution presented in Section II. 


Consider the same CoMP-set $y$ that is considered in Section IV. In CoMP-set $y$, the NOMA cluster served by the MBS contains $\Phi_{y,m}$ non-CoMP-UEs and $\Phi_{y,ms_x}$ CoMP-UEs, and the NOMA cluster served by SBS $x$ contains $\Phi_{y,s_x}$ non-CoMP-UEs and $\Phi_{y,ms_x}$ CoMP-UEs. In CoMP-set $y$, let $\Psi_{y,m} = \Phi_{y,m} + \Phi_{y,ms_x}$ denote the total number of users in the NOMA cluster served by the MBS and  $\Psi_{y,s_x} = \Phi_{y,s_x} + \Phi_{y,ms_x}$  denotes the total number of users in the NOMA cluster served by SBS $x$. For any CoMP-BS $n \in \{m,s_x\}$ in CoMP-set $y$, the achievable rate formula for a NOMA user $l \in \{1,2,\cdots, \Psi_{y,n}\}$ (either a non-CoMP-UE or a CoMP-UE) can be written as
\begin{align}
\hat{R}_l =\log_2\left[1+\frac{p_l^{(n)} \gamma_l^{(n)}}{\sum\limits_{l^\prime = l+1}^{\Psi_{y,n}} p_{l^\prime}^{(n)} \gamma_l^{(n)} + 1}\right]
\label{s_mcn_1}
\end{align}
where $\gamma_l^{(n)}$ represents channel gain for NOMA user $l$ with serving BS $n$. The terms $p_l^{(n)}$ and $\sum_{l^\prime = l+1}^{\Psi_{y,n}} p_{l^\prime}^{(n)}$, respectively, denote the downlink transmit power for user $l$ (i.e., transmit power for desired signal) and downlink transmit power for other users in BS $n$ which form NOMA cluster with user $l$ but have higher SIC ordering than user $l$ (i.e., transmit power for INUI signal). Then, the distributed sum-rate maximization problem for CoMP-BS $n\in \{m,s_x\}$ can be formulated as
\begin{align} 
\underset{\bm{ \mathbf{p}}^{(n)}}{\text{max}}~ \sum\limits_{l = 1}^{\Psi_{y,n}} \hat{R}_l 
\label{s_mcn_op_1}
\end{align}
subject to: 
\begin{align*} 
&\bm{\mathbf{\hat{C}1:}}~\sum\limits_{l = 1}^{\Psi_{y,n}}p_{l}^{(n)} \leq p_t^{(n)} \nonumber \\
&\bm{\mathbf{\hat{C}2:}}~ \hat{R}_l \geq R_l^\prime, \, \forall l = 1,2,\cdots, \Psi_{y,n} \nonumber\\ 
&\bm{\mathbf{\hat{C}3:}}~ p_{l}^{(n)}\gamma_{q}^{(n)} - \sum\limits_{l^\prime = l+1}^{\Psi_{y,n}}p_{l^\prime}^{(n)}\gamma_{q}^{(n)} - 1 \geq \theta, \,  \Big\{\substack{\forall l = 1,2,\cdots,\left(\Psi_{y,n}-1\right) \\  \forall q = l,(l+1), \cdots, \Psi_{y,n}}  
\end{align*}
where $\bm{\mathbf{\hat{C}1}}$ represents the power budget constraint for NOMA cluster at CoMP-BS $n$, while $\bm{\mathbf{\hat{C}2}}$ and $\bm{\mathbf{\hat{C}3}}$, respectively, denote each of the NOMA user's rate requirement and SIC requirement in a NOMA cluster at CoMP-BS $n$ which belongs to CoMP-set $y$. The optimization problem in \eqref{s_mcn_op_1} is exactly the same as the optimization problem in \eqref{scn_3}. 
Thus, the KKT optimization method \cite{msali2016} can be utilized to obtain the optimal solution (see footnote 2) for the distributed optimization approach. Note that, for $3$-BS CoMP-sets, the distributed power optimization approach is the same as \eqref{s_mcn_op_1}, but it should be solved at each of the three CoMP-BSs.

\section{Validity of the  Distributed Power Optimization Approach and Insights}

\subsection{Validity of the DPO Approach for JT-CoMP-NOMA}

A solution for the distributed power optimization (DPO) approach in \eqref{s_mcn_op_1} will be a feasible solution for the joint power optimization (JPO) approach in \eqref{mcn_op_1} 
if all the constraints in \eqref{s_mcn_op_1} belong to the feasible solution region of \eqref{mcn_op_1}. Note that, the KKT optimality model in \cite{msali2016} can be used to obtain the optimal solution for both the JPO and DPO approaches. 
 According to \cite{msali2016}, the optimal power allocation solution for a downlink NOMA always provides the minimum power to all NOMA users except the cluster-head; however, that minimum power must satisfy each user's rate requirement and the SIC requirement. After satisfying all the requirements for non-cluster-head
NOMA users, the rest of the available power budget is allocated to the cluster-head. Under this solution technique, the necessary conditions, under which a solution obtained using the DPO approach will be a feasible solution for the JPO approach, are as follows:

\subsubsection{Power Budget Constraint}
The NOMA power budget constraint for the DPO approach \eqref{s_mcn_op_1} and the JPO approach \eqref{mcn_op_1} are exactly the same, which can be easily verified by setting $n = m$, i.e., constraints $\bm{\mathbf{\hat{C}1}}$ and $\bm{\mathbf{C1}}$  are same, 
and by setting $n = s_x$, i.e., constraints $\bm{\mathbf{\hat{C}1}}$ and $\bm{\mathbf{C2}}$
 are same. Note that $\Psi_{y,n} = \Phi_{y,n} + \Phi_{y,ms_x},\, \forall n\in \{m,s_x\}$.

\subsubsection{Data Rate and SIC Constraints}
In JT-CoMP-NOMA, each CoMP-BS forms a NOMA cluster by incorporating both CoMP-UEs and non-CoMP-UEs, where the non-CoMP-UEs are different at each CoMP-BS but CoMP-UEs are the same with similar SIC ordering at each CoMP-BS belonging to a particular CoMP-set. 
The achievable rate formula for a CoMP-UE and a non-CoMP-UE are different in the JPO approach \eqref{mcn_op_1}, while the achievable rate formulas for all NOMA users (either CoMP-UE or non-CoMP-UE) are the same in the DPO approach \eqref{s_mcn_op_1}. 
The conditions under which the NOMA users' rate and SIC constraints under the DPO approach \eqref{s_mcn_op_1} will satisfy the respective constraints under the JPO approach \eqref{mcn_op_1} are as follows:
\begin{itemize}

\item[$\bullet$] If the ICI for non-CoMP-UEs is negligible, then the non-CoMP-UEs' achievable rate formula under the JPO and DPO approaches are exactly the same, i.e., \eqref{mcn_1} and \eqref{s_mcn_1} are similar when the ICI component in \eqref{mcn_1} is negligible.

\item[$\bullet$] If the ICI is not negligible for non-CoMP-UEs, then with the DPO approach, the minimum rate requirement for the non-CoMP-UEs (except the cluster-head) cannot be met. By introducing an offset ICI into the achievable rate formula for a non-CoMP-UE under DPO approach, the  minimum rate requirement can be easily satisfied. The offset ICI, denoted as $\hat{I}_{l}^{n^\prime}$, to a NOMA user $l$ for interfering BS $n^\prime$ can be expressed as 
\begin{align}
\hat{I}_{l}^{n^\prime} = \left(p_t^{(n^\prime)}-\sum\limits_{k\in \mathcal{K}} p_k^{(n^\prime)}\right)\gamma_{l}^{(n^\prime)}
\label{offsetICI_1}
\end{align}
where $\gamma_{l}^{(n^\prime)}$ is the power gain of interfering channel normalized with respect to noise power, $\mathcal{K}$ is the set of CoMP-UEs, $p_k^{(n^\prime)}$ is the distributively allocated power to CoMP-UE $k$ from CoMP-BS $n^\prime$, and $p_t^{(n^\prime)}$ denotes the NOMA power budget at CoMP-BS $n^\prime$. Since each CoMP-BS individually meets the rate requirement for a common CoMP-UE, it can easily determine $\hat{I}_{l}^{n^\prime}$ for any CoMP-BS $n^\prime$ if the power budget $p_t^{(n^\prime)}$ is known to all CoMP-BSs $n$.

\item[$\bullet$] According to \textbf{Lemma 4} below, the rate constraint for a CoMP-UE under the DPO approach always satisfies the constraint under the JPO approach.

\item[$\bullet$] In a JT-CoMP-NOMA model, since the CoMP-UEs receive simultaneous transmissions from all CoMP-BSs, their corresponding SIC constraints under the DPO approach satisfy the SIC constraints under the JPO approach (similar to \textbf{Lemma 4}). For a non-CoMP-UE, by adding the offset ICI of \eqref{offsetICI_1} into the SIC constraint in \eqref{s_mcn_op_1}, the corresponding SIC constraints in \eqref{mcn_op_1} can be satisfied. It can be easily verified by noting the constraints. 

\end{itemize}

\begin{lemma}
The achievable rate for a CoMP-UE under the DPO approach \eqref{s_mcn_op_1} is higher than the rate that can be achieved under the JPO approach \eqref{mcn_op_1}. The data rate of a CoMP-UE under the DPO approach will be exactly equal to that under the JPO approach if the noise power in their SINR expressions is divided by the number of CoMP-BSs, i.e., if \eqref{s_mcn_1} is modified as
\begin{align}
\hat{R}_l =\log_2\left[1+\frac{p_l^{(n)} \gamma_l^{(n)}}{\sum\limits_{l^\prime = l+1}^{\Psi_{y,n}} p_{l^\prime}^{(n)} \gamma_l^{(n)} + 1/\underline{N}}\right], \, \forall l\in \mathcal{K}
\label{LL_4}
\end{align} 
where $\underline{N}$ is the number of CoMP-BS and $\mathcal{K}$ is the set of common CoMP-users.
\end{lemma}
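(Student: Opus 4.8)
The plan is to compare, for a fixed CoMP-UE $k\in\mathcal{K}$ served by the $\underline{N}$ CoMP-BSs of the set, the rate that the distributed bookkeeping assigns to it through the per-cluster formula \eqref{s_mcn_1} against its true joint-transmission rate \eqref{mcn_3}, and then to show that redistributing the receiver noise reconciles the two exactly. To keep the algebra transparent I will write the desired power delivered through BS $n$ as $S_n=p_k^{(n)}\gamma_k^{(n)}$ and the total higher-SIC (INUI plus non-CoMP) interference seen through BS $n$ as $I_n=\sum_{l'>k}p_{l'}^{(n)}\gamma_k^{(n)}$. With this notation the DPO rate in cluster $n$ is $\hat R_k^{(n)}=\log_2\!\big[1+S_n/(I_n+1)\big]$, while collecting terms in \eqref{mcn_3} gives $R_k=\log_2\!\big[1+(\sum_n S_n)/(\sum_n I_n+1)\big]$, both sums running over the $\underline{N}$ CoMP-BSs.

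For the inequality I would argue that, since the CoMP-UE simultaneously forms a NOMA cluster at each of the $\underline{N}$ CoMP-BSs, the rate the distributed scheme accounts to it is the aggregate $\sum_n\hat R_k^{(n)}$, and I would prove $\sum_n\hat R_k^{(n)}\ge R_k$ by a super-additivity estimate. It suffices to show $\prod_n\big(1+S_n/(I_n+1)\big)\ge 1+(\sum_n S_n)/(\sum_n I_n+1)$; this follows by first using $\prod_n(1+x_n)\ge 1+\sum_n x_n$ for the nonnegative terms $x_n=S_n/(I_n+1)$, and then bounding $\sum_n S_n/(I_n+1)\ge\sum_n S_n/(\sum_m I_m+1)=(\sum_n S_n)/(\sum_n I_n+1)$, the last step using $I_m\ge 0$. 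This yields the first assertion and, in the validity argument of Section~VI-A, shows that meeting the CoMP-UE target in every cluster is at least as strong as meeting it jointly.

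For the exact equality under \eqref{LL_4} I would use a physical noise-accounting argument. Under JT-CoMP the $\underline{N}$ copies of the CoMP-UE signal are combined at a single receive antenna carrying a single normalized noise term equal to $1$, whereas the distributed model \eqref{s_mcn_1} attaches a full unit-noise term to each of the $\underline{N}$ clusters and therefore over-counts the receiver noise by the factor $\underline{N}$. Splitting the single physical noise into $\underline{N}$ equal shares and assigning one share $1/\underline{N}$ to each cluster replaces $I_n+1$ by $I_n+1/\underline{N}$, which is exactly \eqref{LL_4}. Aggregating the $\underline{N}$ branches as the JT receiver does, i.e., summing the signal powers, the interference powers, and the noise shares at the single antenna, gives the combined SINR $(\sum_n S_n)/(\sum_n I_n+\sum_n 1/\underline{N})=(\sum_n S_n)/(\sum_n I_n+1)$, which is precisely the JPO SINR of \eqref{mcn_3}; hence the modified distributed rate equals $R_k$.

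The step I expect to be the main obstacle is reconciling the two noise models, and with it the two notions of ``rate under the DPO approach'' that the lemma's two assertions implicitly use. The first assertion treats the CoMP-UE as $\underline{N}$ independent single-cell users, each with its own unit noise, and aggregates their rates additively; the second treats it as one user whose single unit noise has been partitioned across the $\underline{N}$ clusters. The crux is to make the factor-$\underline{N}$ noise discrepancy explicit and to verify that dividing the noise by $\underline{N}$ simultaneously (i) keeps the interference aggregation $\sum_n I_n$ consistent between the per-cluster and the combined views and (ii) upgrades the inequality of the first part to the term-by-term equality claimed in \eqref{LL_4}.
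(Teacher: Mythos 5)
Your reduction to the per-BS quantities $S_n=p_k^{(n)}\gamma_k^{(n)}$, $I_n=\sum_{l'>k}p_{l'}^{(n)}\gamma_k^{(n)}$ and your identification of the factor-$\underline{N}$ noise over-count is the right physical picture, but both halves of the argument have a genuine gap. For the first assertion, the premise that the DPO scheme credits the CoMP-UE with the aggregate $\sum_n \hat R_k^{(n)}$ is inconsistent with the model: the CoMP-BSs jointly transmit a \emph{single} message, the constraint $\bm{\mathbf{\hat{C}2}}$ in \eqref{s_mcn_op_1} imposes the target $R_k'$ separately in each cluster, and the DPO rate of a CoMP-UE is the per-cluster expression \eqref{s_mcn_1}, never a sum over clusters (a receiver decoding one superposed signal cannot realize the sum of per-cluster rates). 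Worse, your super-additivity bound $\sum_n \hat R_k^{(n)} \ge R_k$ points in the unusable direction: what Section~VI-A needs from the lemma is that the DPO allocation \emph{over-satisfies} the JPO rate constraint, i.e., that $\hat R_k^{(n)} \ge R_k'$ in every cluster implies $R_k \ge R_k'$ for the delivered joint rate, and an upper bound on $R_k$ by the sum yields no such implication. (That implication does hold, e.g., via the mediant-type chain $\frac{\sum_n S_n}{\sum_n I_n + 1} \ge \frac{\sum_n S_n}{\sum_n (I_n+1)} \ge \min_n \frac{S_n}{I_n+1}$, but one must argue through SINRs this way, not through additive rates.)

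For the equality under \eqref{LL_4}, observing that signal powers, interference powers and the noise shares $1/\underline{N}$ sum to the joint SINR's numerator and denominator is an identity of sums; it does not make the per-cluster rate $\log_2\bigl[1+S_n/(I_n+1/\underline{N})\bigr]$ equal to $R_k$. For that you need the modified per-cluster SINRs to be \emph{equal across the $\underline{N}$ clusters}, which holds at the DPO solution precisely because CoMP-UEs are non-cluster-heads and hence receive the minimum power meeting the common target $R_k'$ with equality in every cluster — exactly the step you flag as the obstacle and leave open. The paper closes both halves by a different, power-domain route: it inverts the rate constraints (with $R_k'$ set to the OMA benchmark \eqref{lem5_3}) into lower bounds on the aggregate desired signal power, runs a recursion over the SIC index $k$, and shows that the DPO bound \eqref{lem5_6_6} carries the additive constant $+\underline{N}$ (i.e., $+2$ for the two-BS set) where the JPO bound \eqref{lem5_5} carries $+1$; more desired signal power at matched interference means a higher delivered rate, and replacing the unit noise by $1/\underline{N}$ per cluster turns $+\underline{N}$ back into $+1$, giving exact equality. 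Your noise-accounting intuition matches that mechanism, but as written the proposal establishes neither the feasibility direction of the first claim nor the equality of the second.
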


\begin{proof}
See \textbf{Appendix B}.
\end{proof}

According to \textbf{Lemma~3}, the SIC ordering for each CoMP-UE should be the same for the distributed power allocation approach in \eqref{s_mcn_op_1} used at each CoMP-BS.
 On the other hand, \textbf{Lemma~1} restricts the convexity of problem \eqref{s_mcn_op_1} for only ascending channel gain-based SIC ordering. Therefore, the optimal SIC ordering may not be possible at all the CoMP-BSs simultaneously. For maximum sum-rate across all the CoMP-BSs, all the possible combinations of CoMP-UEs' channel gain orders should be checked exhaustively. In the section on numerical results, the rate performance of different decoding orders for the CoMP-UEs will be investigated and we will explain the insights generated through the results. Note that, at each CoMP-BS, non-CoMP-UEs  have a higher SIC ordering than CoMP-UEs.

\subsection{Insights on Energy Efficiency Performance of CoMP-NOMA}


JT-CoMP is an effective rate improvement technique for ICI-prone users in downlink OMA systems. Under an OMA system, signal transmitted to a CoMP-UE by each CoMP-BS does not contain any interference and thus the resultant SINR significantly improves (i.e., SINR = SNR). On the other hand, with NOMA, signal transmitted to a CoMP-UE by each CoMP-BS contains INUI, and thus the resultant SINR may not improve significantly. Under distributed power allocation, \textbf{Lemma~4} shows that each CoMP-BS individually needs to meet the same rate requirement for each CoMP-UE. In addition, under the JT-CoMP-NOMA model, a CoMP-UE has a lower SIC ordering. Thus, according to the achievable rate formula in \eqref{s_mcn_1}, the SINR for each CoMP-UE in each NOMA cluster can be approximated to the ratio of desired transmit power to INUI power 
 gain, i.e., 
\begin{align}
\frac{p_l^{(n)} \gamma_l^{(n)}}{\sum\limits_{l^\prime = l+1}^{\Psi_{y,n}} p_{l^\prime}^{(n)} \gamma_l^{(n)}+ 1} \approx \frac{p_l^{(n)}}{\sum\limits_{l^\prime = l+1}^{\Psi_{y,n}} p_{l^\prime}^{(n)}}
\label{EE_1}
\end{align}
if $\sum_{l^\prime = l+1}^{\Psi_{y,n}} p_{l^\prime}^{(n)} \gamma_l^{(n)}+ 1 \approx \sum_{l^\prime = l+1}^{\Psi_{y,n}} p_{l^\prime}^{(n)} \gamma_l^{(n)},\, \forall l \in \Phi_{y,ms_x}$.
Therefore, the average energy requirement for transmitting each bit could be significantly high for downlink JT-CoMP-NOMA.

With JT-CoMP-NOMA, since each CoMP-BS individually meets the rate requirements for CoMP-UEs, we can apply either DPS-CoMP or CS-CoMP instead of JT-CoMP to achieve similar rate of JT-CoMP-NOMA. In such a DPS/CS-CoMP scheme, all of the CoMP-BSs use the same resource block for NOMA cluster under a particular CoMP-set but only one CoMP-BS includes the CoMP-UEs into her NOMA cluster, while the others form NOMA clusters including only non-CoMP-UEs. However, under the proposed JT-CoMP-NOMA model, since all of the CoMP-BSs  use the same resource block, the CoMP-BSs only transmitting to the non-CoMP-UEs could cause significant interference to  the CoMP-UEs. By controlling the transmit power at the CoMP-BSs  transmitting only to non-CoMP-UEs, the interference caused to the CoMP-UEs under the DPS/SC-CoMP could be minimized. In such a case, the sum-rate under DPS/SC-CoMP-NOMA may decrease to some extent in comparison with JT-CoMP-NOMA.


\section{Numerical Analysis}

In this section,  we examine the spectral efficiency (SE) for JT-CoMP-NOMA under the proposed JPO and DPO approaches and also compare the results with those for JT-CoMP-OMA. We also examine the energy efficiency (EE) gain for the proposed JT-CoMP-NOMA system over their OMA counterparts. The SE is measured in bits/sec/Hz while EE is measured in Mb/J. The SE and EE for a CoMP-set is taken as a summation over all the users served by that CoMP-set, while the SE and EE for a CoMP-BS is taken as a summation over all the users at the considered BS. 
All the numerical results are generated using MATLAB.

\subsection{Simulation Assumptions}

We consider a full frequency reuse in a two-tier HetNet consisting of one macrocell underlaid by two partially overlapped 
small cells, which is similar to Fig.\ref{fig:sm}. Each BS in the corresponding cell is located at the center of the corresponding circular coverage area. 
 We manually solve both optimization problems, i.e., JPO and DPO, and obtain the optimal solutions by utilizing the KKT optimality conditions \cite{msali2016}. 
 For JPO, the optimal power allocation is obtained by searching ten thousand points over the feasible power range in each CoMP-BS, i.e., search step size is $p_t/10000$. For DPO, we use \eqref{s_mcn_1} and \eqref{LL_4} to obtain the achievable data rate for non-CoMP users and CoMP users, respectively. In each NOMA cluster, we use one/multiple CoMP-UE but the number of non-CoMP-UEs is restricted to one, and thus no offset ICI is used under the DPO approach. Note that an offset ICI expressed in \eqref{offsetICI_1} needs to be used for all non-CoMP-UEs except the cluster-head under the DPO approach.

\begin{table} [h]
\centering
\caption{simulation parameters.}
\label{parameters}
\begin{tabular}{|c|c|}
\hline
Parameter         & Value                          \\
\hline \hline
eNB to SBS distance      & $0.75$ Km                       \\
\hline
SBS to SBS distance      & $0.30$ Km                       \\
\hline
System effective bandwidth, $B$     & $8.64$ MHz                       \\
\hline
Bandwidth of one resource block & $180$ kHz                             \\
\hline 
Transmit power budget at eNB, $p_t^{(m)}$     & $46$ dBm                       \\
\hline
Transmit power budget at SBS $x$, $p_t^{(s_x)}$     & $25$ dBm                       \\
\hline
Number of antennas at BS/UE end   & $1$                     \\
\hline 
Antenna gain at BS/UE end    & $0$ dBi                      \\
\hline
SIC detection threshold, $\theta$    & $10$ dBm  \\
\hline
Receiver noise spectral density density, $N_0$    & $-169$ dBm/Hz  \\
\hline                     
\end{tabular}
\end{table}

In the simulations, the wireless channel is assumed to experience only path-loss. The path-loss for a user at a distance $d$ Km from BS is modeled as: $128.1 + 37.6\log_{10}(d)$.  User categorization (i.e., selection of CoMP-UE and non-CoMP-UE) and NOMA clustering, which are done based on the distances of the users from the different BSs, are assumed to be known a priori before power allocation is performed. Since power allocation is performed in order to mitigate the effects of large-scale fading, only path-loss is considered. ICI from a BS outside the CoMP-set is ignored (i.e., assumed to be part of the noise power).  

The simulation parameters are considered in accordance of the 3GPP standards and the major  parameters are shown in Table \ref{parameters}. The number of resource block blocks (RB) is considered to be equal to the number of users in a NOMA cluster. In case of orthogonal multiple access (OMA), the transmit power and spectrum RB are assumed to be equally allocated among the OMA users. Note that EE is calculated by averaging sum-rate over the transmit power. Perfect channel state information (CSI) is assumed to be available at the BS ends. For convenience, we use the term $n$:$m$:$k$ to indicate an JT-CoMP NOMA system consisting of $n$ CoMP-BS (a single eNB and $n-1$ SBS) each of which having individual NOMA cluster of $m$ users and each cluster contains common $k$ number of CoMP-UEs with the same SIC ordering at each cluster and $m-k$ non-CoMP-UEs who are different for different NOMA clusters.

\subsection{Spectral Efficiency Performance}

\begin{figure*}[h]
\begin{center}
	\includegraphics[width=8.5 in]{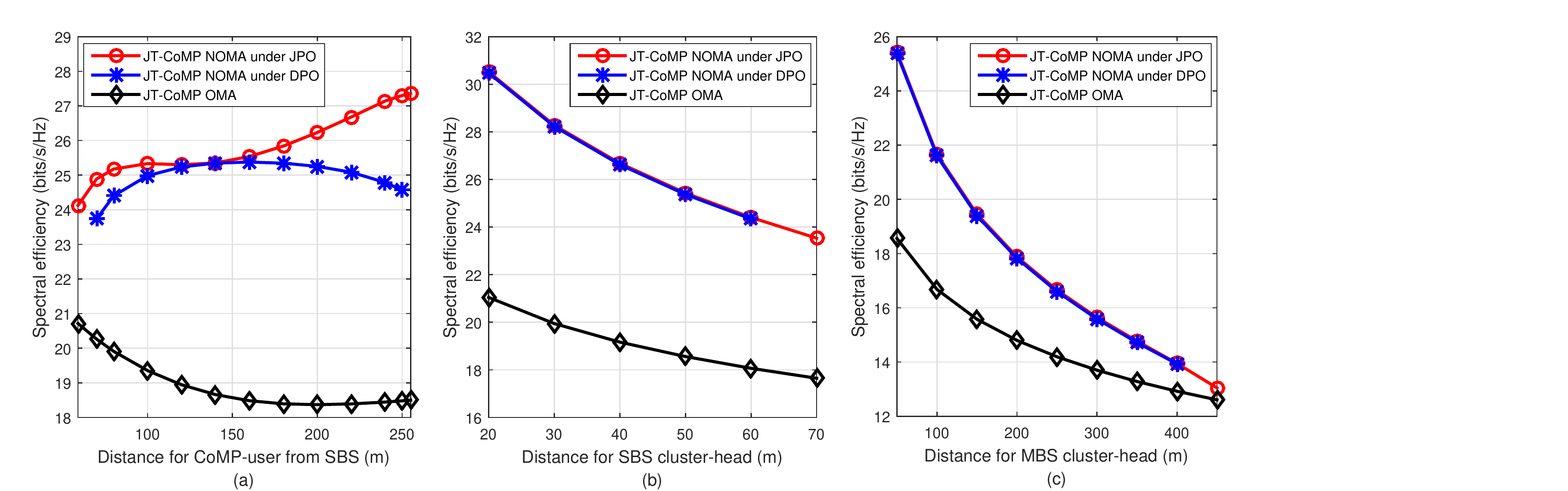}
	\caption{Spectral efficiency for JT-CoMP-NOMA model $2$:$2$:$1$ and JT-CoMP-OMA: (a) each non-CoMP-UE is at a distance of $50$ m  from its serving BS, (b) non-CoMP-UE of macrocell is at a distance of $50$ m  from the eNB and the CoMP-UE is at a distance of $150$ m  from SBS, (c) non-CoMP-UE of SBS is at a distance of $50$ m from SBS and the CoMP-UE is at a distance of $150$ m from SBS.}
	\label{fig:ii}
 \end{center}
\end{figure*} 
\begin{figure*}
\begin{center}
	\includegraphics[width=7.7 in]{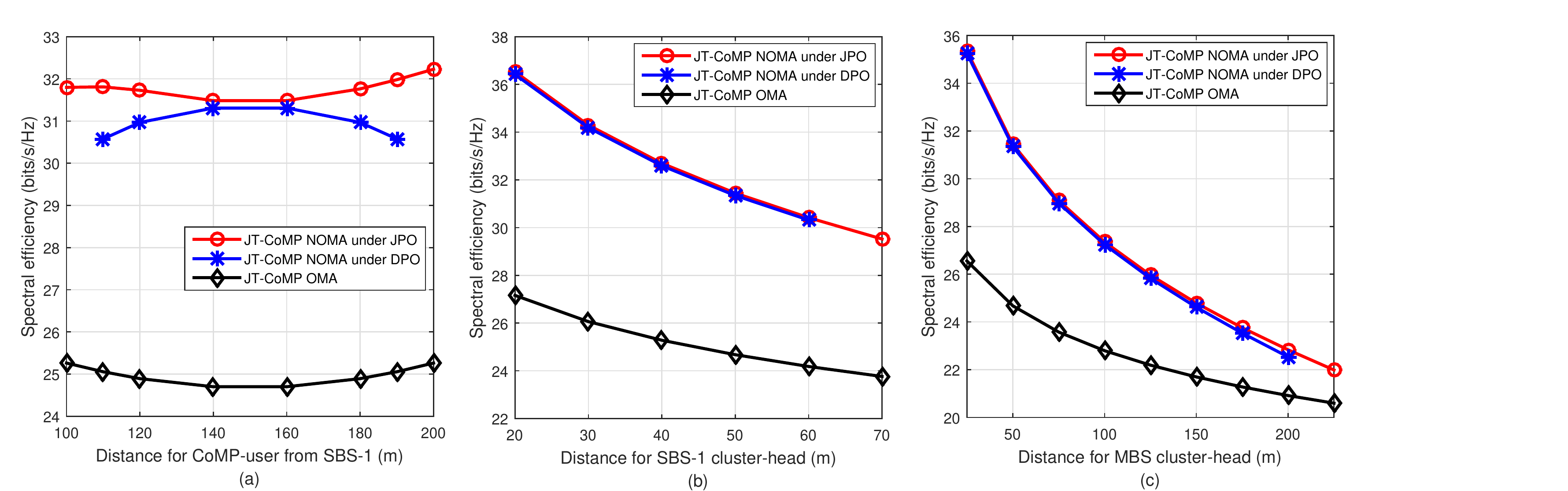}
	\caption{Spectral efficiency for JT-CoMP-NOMA model $3$:$2$:$1$ and JT-CoMP-OMA system: (a) each non-CoMP-UE is at a distance of $50$ m from its serving BS, (b) the non-CoMP-UEs of macrocell  and SBS-2 are at a distance of $50$ m from their serving BSs, while the CoMP-UE is at distance of $150$ m from SBS-1, (c) non-CoMP-UEs of SBS-1 and SBS-2 are at a distance of $50$ m from their serving BSs, while the CoMP-UE is at a distance of $150$ m  from the SBSs.}
	\label{fig:iii}
 \end{center}
 \begin{center}
	\includegraphics[width=7.45 in]{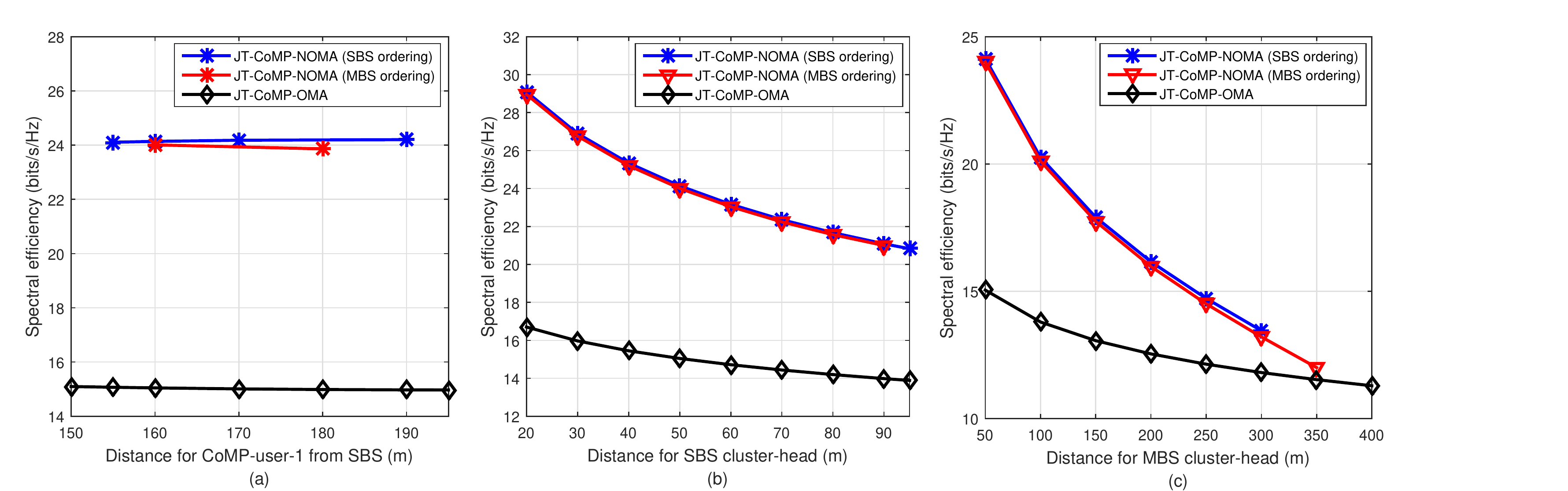}
	\caption{Spectral efficiency for JT-CoMP-NOMA model $2$:$3$:$2$ and JT-CoMP-OMA system: (a) each non-CoMP-UE is at a distance of $50$ m  from its serving BS  and CoMP-UE-2 is at a distance of $150$ m from SBS, (b) non-CoMP-UE of macrocell is at a distance of $50$ m  from eNB, while CoMP-UE-1 is at a distance of $100$ m and CoMP-UE-2 is at a distance of $150$ m from SBS, (c) non-CoMP-UE of SBS is at a distance of $50$ m  from SBS while the CoMP-UEs are at distance as in (b).}
	\label{fig:iv}
 \end{center}
\end{figure*}

The SE performance of the proposed JT-CoMP-NOMA system is analyzed by simulating three different JT-CoMP-NOMA models: $2$:$2$:$1$, $3$:$2$:$1$ and $2$:$3$:$2$, and the corresponding simulation results are shown in Fig. \ref{fig:ii}, Fig. \ref{fig:iii}, and Fig. \ref{fig:iv}. Each individual figure also consists of three different results demonstrating SE performance in terms of distance for CoMP-UE,  non-CoMP-UEs served by SBS, and non-CoMP-UEs served by the eNB, respectively in (a), (b) and (c). While the distance for a non-CoMP-UE is measured with respect to its serving BS, the distances of the CoMP-UEs  are measured with respect to the SBS. In the case of multiple SBSs in a CoMP-set (e.g., $3$-BS CoMP-set), SBS-$1$ is considered to determine the distance for the CoMP-UEs.

All three figures, i.e., Fig. \ref{fig:ii},  Fig. \ref{fig:iii}, and Fig. \ref{fig:iv}, demonstrate a significant improvement of SE gain for JT-CoMP NOMA systems over their OMA counterparts. However, if we compare the results in Fig. \ref{fig:ii}(a) and Fig. \ref{fig:iii}(a), we observe that the range of distance for CoMP-user (which in turns determines the range of variation of CoMP-user's channel power gain) to form NOMA cluster with non-CoMP-user is reduced as the number of BS in a CoMP-set increases. As the number of BSs in a CoMP-set increases, the desired signal power for CoMP-UE also increases additively under JT-CoMP-OMA. On the other hand, for a CoMP-UE under 
JT-CoMP-NOMA, the desired signal power and INUI power both additively increase as the number of BSs in a CoMP-set increases. Also, the data rate for a CoMP-UE in a NOMA cluster is proportional to the ratio of power allocated to that CoMP-UE and the power allocated to all other NOMA users which have higher SIC ordering. Thus, to achieve the same data rate that a CoMP-UE would achieve with OMA  in Fig. \ref{fig:iii}(a), the transmission power available for a CoMP-UE in each NOMA cluster could be insufficient.

Fig. \ref{fig:ii}(b), \ref{fig:iii}(b), \ref{fig:ii}(c) and \ref{fig:iii}(c) demonstrate a significantly large range of non-CoMP-UE's channel variation to form NOMA cluster with CoMP-UEs. This is because, the non-CoMP-UE in each NOMA cluster is the cluster-head and has very high channel power gain (due to closer distance to their serving BS), and thus until their channel gains decrease significantly, they can achieve a higher data rate even-though they are allocated a small portion of the transmit power. 

Fig. \ref{fig:ii}(a) and Fig. \ref{fig:iii}(a) also show the performance gap between the proposed JPO approach and the low-complexity DPO approach. It can be found that the performance gap is very small around the middle points (120-160 m in Fig. \ref{fig:ii}(a) and 140-160 m in Fig. \ref{fig:iii}(a)) over the range of CoMP-UEs' channel variations. In JT-CoMP, each CoMP-BS individually needs to meet the rate requirement for the common CoMP-UEs. On the other hand, under the JPO approach, to allocate power to a CoMP-UE, a CoMP-BS considers all the possible combinations of NOMA power allocation used by other CoMP-BSs, and thus obtain global optimal power allocation over all CoMP-BSs. On the other hand, under the DPO approach, each CoMP-BS independently allocates power to CoMP-UE and non-CoMP-NOMA users, and thus obtains a locally optimal power allocation at each CoMP-BS. However, the resultant SE performance gap is not significant and it can be considered as the cost of the  complexity reduction of JT-COMP-NOMA. Note that the computational complexity of the JPO approach is of $O(2^{n\times k})$, where $n$ and $k$ are the number of CoMP-BSs and CoMP-UEs, respectively, within a CoMP-set.

Fig. \ref{fig:iv} illustrates the SE gain for the proposed JT-CoMP-NOMA model under DPO approach over JT-CoMP-OMA.  Fig. \ref{fig:iv}(a) shows that the range for non-CoMP-UEs' channel variation decreases to much lower values in comparison with Fig. \ref{fig:ii} (a) and \ref{fig:iii}(a). Since two CoMP-UEs are included in each NOMA cluster, the rate requirement for CoMP-users may not facilitate sufficient flexibility for their channel variation under JT-CoMP-NOMA. Moreover, as the power budget of eNB is significantly higher than that  of SBS, the CoMP-users' SIC ordering according to their ascending channel gain with SBS provides more flexibility for JT-CoMP-NOMA, which is indicated in Fig. \ref{fig:iv}(a).

\subsection{Energy Efficiency Performance}

\begin{figure}[h]
\begin{center}
	\hspace*{9 em}\includegraphics[width=9 in]{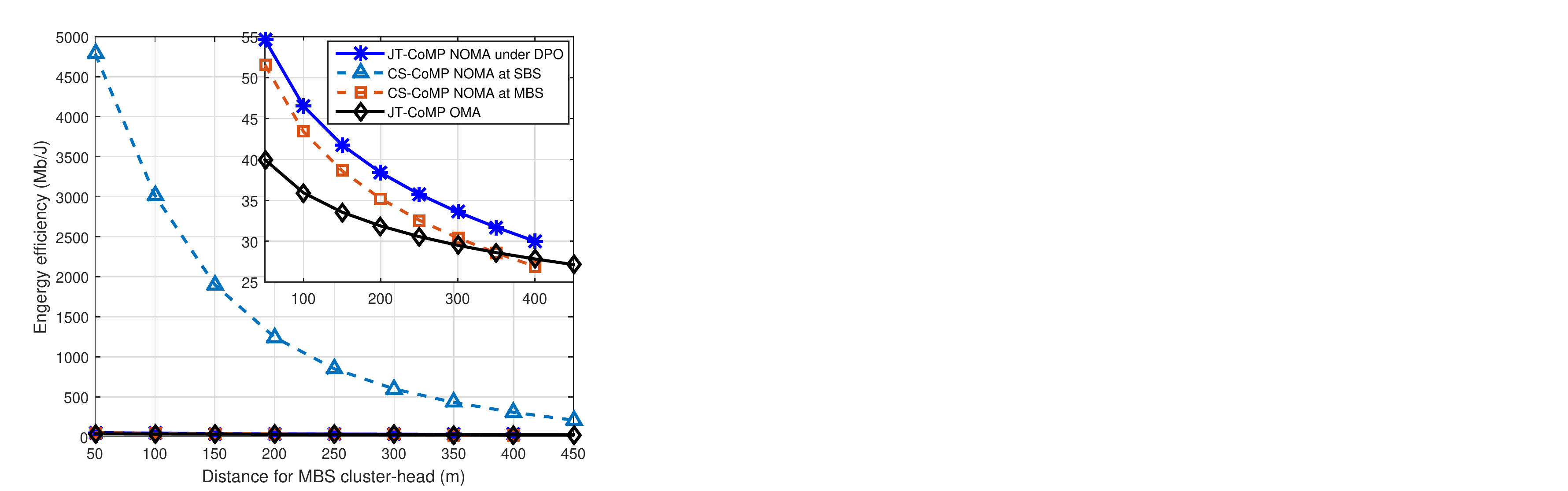}
	\caption{Energy efficiency for JT-CoMP-NOMA model $2$:$2$:$1$, corresponding CS-CoMP-NOMA and JT-CoMP-OMA. Non-CoMP-UE of SBS is at a distance of $50$ m  from SBS while the CoMP-UE is at a distance of $100$ m from SBS.}
	\label{fig:v}
 \end{center}
\end{figure} 
\begin{figure}[h]
\begin{center}
	\hspace*{6.7 em}\includegraphics[width=9.5 in]{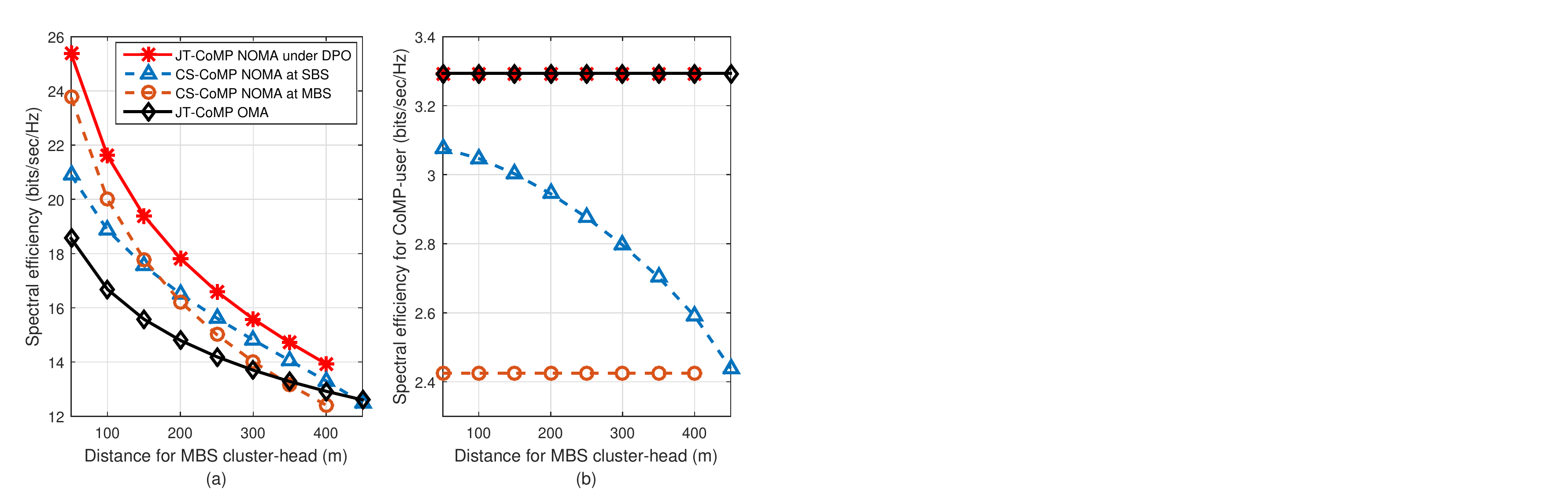}
	\caption{Spectral efficiency for JT-CoMP-NOMA model $2$:$2$:$1$, corresponding CS-CoMP-NOMA and JT-CoMP-OMA. (a) SE over CoMP-set, (b) SE for CoMP-UE. Non-CoMP-UE of SBS and CoMP-UE are at a distance of $50$ m, and $150$ m, respectively, from SBS.}
	\label{fig:vi}
 \end{center}
\end{figure}

The EE performance of the proposed CoMP-NOMA system under both JPO and DPO approaches and its comparison with the CoMP-OMA system are demonstrated in Fig. \ref{fig:v}. 
As discussed in Section VI, under a JT-CoMP NOMA system, all the CoMP-BSs  need to individually satisfy the rate requirements for CoMP-users whose SINRs depend on the ratio of desired signal power and INUI power. As a result, the EE performance gain for JT-CoMP NOMA in Fig. \ref{fig:v} (index view) does not show significant improvement over JT-CoMP-OMA. However, in the case of the CS-CoMP-NOMA discussed in Section VI.A, the EE gain dramatically increases  when the CoMP-user is served by an SBS. The reasons include the high channel gain for CoMP-UE with the SBS, low transmit power budget of  an SBS, and high transmit power of eNB. When the CoMP-UE is served by the SBS, it  uses its full power budget while the eNB only uses the minimum power that it needs to fulfill the non-CoMP-UEs' minimum rate requirements. Fig. \ref{fig:vi}(a) shows the SE for the proposed CS-CoMP-NOMA CoMP-set in comparison with JT-CoMP OMA and JT-CoMP-NOMA model $2$:$2$:$1$.

Under CS-CoMP, all the CoMP-BSs except one, which serves the CoMP-UE,  utilize power control, thus the offset ICI in \eqref{EE_1} could not be utilized. 
As a result, all individual user's achievable data rate under CS-JT-CoMP transmission may not meet the rate requirement that could be obtained in JT-CoMP-OMA. In particular, a CoMP-UE's date rate decreases as the channel gains for the non-CoMP-UEs served by the eNB decease, which is shown in Fig. \ref{fig:vi}(b). This can be intuitively explained as follows: as the distance of a non-CoMP-UE served by the eNB increases with respect to the eNB itself, then the effect of channel gain in the SINR formula  decreases and thus the required amount of transmit power increases. Increase in the transmit power for non-CoMP-UEs served by the eNB significantly increases ICI to CoMP-UEs and thus the SINR (and hence the achievable data rate) decreases. The ICI for non-CoMP-UEs served by the SBS also increases; however, this impact can be small if its channel gain is sufficiently high. 


\section{Conclusion}
We have modeled and analyzed the problem of dynamic power allocation for downlink CoMP-NOMA  in a two-tier HetNet. For a CoMP-set, we have formulated the optimal power allocation for downlink CoMP-NOMA  as a joint power optimization problem among the coordinating BSs. Due to the high computational complexity associated with the proposed joint power optimization approach, 
a low-complexity distributed power optimization method has been proposed for each coordinating BS. The optimal power allocation solution for each coordinating BS under the distributed approach is independent of the solutions at other coordinating BSs. We have also derived the necessary conditions under which the distributed power optimization solution becomes a valid solution for the joint power optimization problem. Finally, through rigorous numerical analysis, we have demonstrated the gain in spectral efficiency for the proposed CoMP-NOMA system in comparison with their CoMP-OMA counterparts. The performance gap between the proposed joint power optimization and distributed power optimization approaches has been considered as well. Insights on the energy efficiency performance of the proposed CoMP-NOMA models have been also discussed. The proposed dynamic power allocation model for CoMP-NOMA in a two-tier HetNet is also valid for any multi-cell downlink system.



\begin{appendices}
\renewcommand{\theequation}{A.\arabic{equation}}
\setcounter{equation}{0}
\section{proof of lemma 1}
To prove \textbf{Lemma 1} we need to prove that the Hessian matrix for an $M$-user, $M\geq 2$, downlink NOMA sum-rate is negative definite. A matrix is negative definite if all its $m$-th order, $ \forall m \in M$, leading principal minors alternate in sign, starting from negative, i.e., the minors of odd-numbered order are negative and those of even-numbered order are positive.

As an example, consider a $2$-user downlink NOMA. Then the sum-rate formula over unit NOMA resource block is
\begin{align}
R_{sum}^{(2)} = \log_2\Big(1+\frac{p_1 \gamma_1}{p_2\gamma_1 + 1}\Big) + \log_2\Big(1 + p_2\gamma_2\Big)
\label{L1_1}
\end{align}
where $\gamma_2 > \gamma_1$. The Hessian matrix of \eqref{L1_1} can be written as 
\begin{align*}
\bm{\mathbf{H}}^{(2)} = 
\begin{bmatrix}
\frac{\partial^2 R_{sum}^{(2)}}{\partial p_1^2} & \frac{\partial^2 R_{sum}^{(2)}}{\partial p_1\partial p_2} \\
\frac{\partial^2 R_{sum}^{(2)}}{\partial p_2\partial p_1} & \frac{\partial^2 R_{sum}^{(2)}}{\partial p_2^2}
\end{bmatrix} = 
\begin{bmatrix}
-\pi_2 & -\pi_2 \\
-\pi_2 & -\pi_2 + \pi_2^\prime - \phi_2
\end{bmatrix}
\end{align*}
where $\pi_2 = \left(\frac{\gamma_1}{(p_1 + p_2)\gamma_1 +1}\right)^2$, $\pi_2^\prime = \left(\frac{\gamma_1}{p_2\gamma_1 +1}\right)^2$ and $\phi_2 = \left(\frac{\gamma_2}{p_2\gamma_2 +1}\right)^2$. Then the first-order principle minor of $\bm{\mathbf{H}}^{(2)}$, denoted as $\det{\begin{vmatrix}{\bm{\mathbf{H}}_1^{(2)}}\end{vmatrix}}$, can be written as
\begin{align}
\det{\begin{vmatrix}{\bm{\mathbf{H}}_1^{(2)}}\end{vmatrix}} = -\pi_2.
\label{L1_2_1}
\end{align}
The second-order principle minor of $\bm{\mathbf{H}}^{(2)}$, denoted as $\det{\begin{vmatrix}{\bm{\mathbf{H}}_2^{(2)}}\end{vmatrix}}$, can be written as
\begin{align}
\det{\begin{vmatrix}{\bm{\mathbf{H}}_2^{(2)}}\end{vmatrix}} &= \pi_2(\phi_2 - \pi_2^\prime)  \nonumber \\
&=\pi_2\left(\frac{(2p_2\gamma_1\gamma_2 + \gamma_1 + \gamma_2)(\gamma_2 - \gamma_1)}{(p_2\gamma_1 + 1)^2(p_2\gamma_2 + 1)^2}\right).
\label{L1_2_2}
\end{align}
Since $\pi_2, \pi_2^\prime$ and $\phi_2$ all are positive and $\gamma_2 >\gamma_1$, the first-order principle minor of $\mathrm{\bm{\mathbf{H}}^{(2)}}$ is negative and those for second order is positive. Therefore, the the sum-rate of $2$-user downlink NOMA of ascending channel gain based SIC ordering is a strictly concave function of transmit power.

Now, consider a $3$-user downlink NOMA of ascending channel gain based SIC ordering, then the sum-rate over unit NOMA resource block is 
\begin{align}
R_{sum}^{(3)} = \sum\limits_{i = 1}^{3}\log_2\Bigg(1+\frac{p_i \gamma_i}{\sum\limits_{j = i+1}^{3} p_j\gamma_i + 1}\Bigg)
\label{L1_3}
\end{align}
where $\gamma_3 > \gamma_2 > \gamma_1$. The Hessian matrix of \eqref{L1_3} can be written as 
\begin{align*}
\mathrm{\bm{\mathbf{H}}^{(3)}} = 
\begin{bmatrix}
-\pi_3 & -\pi_3  & -\pi_3\\
-\pi_3 & -\pi_3 + \pi_3^\prime - \phi_3 & -\pi_3 + \pi_3^\prime - \phi_3 \\
-\pi_3 & -\pi_3 + \pi_3^\prime - \phi_3 & -\pi_3 + \pi_3^\prime - \phi_3 +\phi_3^\prime - \psi_3
\end{bmatrix}
\end{align*}
where $\pi_3 = \left(\frac{\gamma_1}{(p_1 + p_2 + p_3)\gamma_1 +1}\right)^2$, $\pi_3^\prime = \left(\frac{\gamma_1}{(p_2 + p_3)\gamma_1 +1}\right)^2$, $\phi_3 = \left(\frac{\gamma_2}{(p_2+p_3)\gamma_2 +1}\right)^2$, $\phi_3^\prime = \left(\frac{\gamma_2}{p_3\gamma_2 +1}\right)^2$ and $\psi_3 = \left(\frac{\gamma_3}{p_3\gamma_2 +1}\right)^2$. Then the first-order principle minor of $\mathrm{\bm{\mathbf{H}}^{(3)}}$, denoted as $\det{\begin{vmatrix}{\bm{\mathbf{H}}_1^{(3)}}\end{vmatrix}}$, can be written as
\begin{align}
\det{\begin{vmatrix}{\bm{\mathbf{H}}_1^{(3)}}\end{vmatrix}} = -\pi_3.
\label{L1_2_1}
\end{align}
The second-order principle minor of $\mathrm{\bm{\mathbf{H}}^{(3)}}$, denoted as $\det{\begin{vmatrix}{\bm{\mathbf{H}}_2^{(3)}}\end{vmatrix}}$, can be written as
\begin{align}
\det{\begin{vmatrix}{\bm{\mathbf{H}}_2^{(3)}}\end{vmatrix}} &= \pi_3(\phi_3 - \pi_3^\prime)  =\pi_3\left(\frac{(2(p_2+p_3)\gamma_1\gamma_2 + \gamma_1 + \gamma_2)(\gamma_2 - \gamma_1)}{((p_2+p_3)\gamma_1 + 1)^2((p_2+p_3)\gamma_2 + 1)^2}\right).
\label{L1_2_2}
\end{align}
Also, the third-order principle minor of $\mathrm{\bm{\mathbf{H}}^{(3)}}$, denoted as $\det{\begin{vmatrix}{\bm{\mathbf{H}}_3^{(3)}}\end{vmatrix}}$, can be written as
\begin{align}
&\det{\begin{vmatrix}{\bm{\mathbf{H}}_3^{(3)}}\end{vmatrix}} = -\pi_3(\phi_3 - \pi_3^\prime)(\psi_3 - \phi_3^\prime)  \nonumber \\
&= -\pi_3\left(\frac{(2(p_2+p_3)\gamma_1\gamma_2 + \gamma_1 + \gamma_2)(\gamma_2 - \gamma_1)}{((p_2+p_3)\gamma_1 + 1)^2((p_2+p_3)\gamma_2 + 1)^2}\right) \left(\frac{(2p_3\gamma_2\gamma_3 + \gamma_2 + \gamma_3)(\gamma_3 - \gamma_2)}{(p_3\gamma_2 + 1)^2(p_3\gamma_3 + 1)^2}\right).
\label{L1_2_2}
\end{align}
Since $\pi_2, \pi_2^\prime, \phi_2, \phi_2^\prime$ and $\psi_2$ all are positive and $\gamma_2 > \gamma_2 >\gamma_1$, the first-order and third-order principle minor of $\mathrm{\bm{\mathbf{H}}^{(3)}}$ are negative while those for second-order are positive. Therefore, the the sum-rate of $3$-user downlink NOMA of ascending channel gain based SIC ordering is also a strictly concave function of transmit power. 

The proof can be easily expanded for an $M$-user ($M \geq 2$) downlink NOMA system of ascending channel gain-based SIC decoding order. Using induction, for an $M$-user downlink NOMA system, the leading principal minors can be written as
\begin{align}
\det{\begin{vmatrix}{\bm{\mathbf{H}}_1^{(M)}}\end{vmatrix}}& = -\pi_M \\
\det{\begin{vmatrix}{\bm{\mathbf{H}}_2^{(M)}}\end{vmatrix}}& = \pi_M(\phi_M - \pi_M^\prime) \\
\det{\begin{vmatrix}{\bm{\mathbf{H}}_3^{(M)}}\end{vmatrix}}& = -\pi_M(\phi_M - \pi_M^\prime)(\psi_M - \phi_M^\prime) \\
\vdots \qquad\\
\det{\begin{vmatrix}{\bm{\mathbf{H}}_M^{(M)}}\end{vmatrix}}& = \Delta \pi_M(\phi_M - \pi_M^\prime)(\psi_M - \phi_M^\prime)\cdots(\varpi_M - \eta_M^\prime)
\end{align}
where $\Delta$ represents ($+$) sign for even values of $M$ and ($-$) sign for odd values of $M$, and
\begin{align*} \pi_M &= \left[\frac{\gamma_1}{\sum\limits_{m = 1}^M p_m\gamma_1 +1}\right]^2, \enspace  
\pi_M^\prime = \left[\frac{\gamma_1}{\sum\limits_{m = 2}^M p_m\gamma_1 +1}\right]^2, \enspace  
\phi_M = \left[\frac{\gamma_2}{\sum\limits_{m = 2}^M p_m\gamma_2 +1}\right]^2, \\
\phi_M^\prime &= \left[\frac{\gamma_2}{\sum\limits_{m = 3}^M p_m\gamma_2 +1}\right]^2, \enspace
\psi_M = \left[\frac{\gamma_3}{\sum\limits_{m = 3}^M p_m\gamma_3 +1}\right]^2, \qquad \quad
\cdots \\
\end{align*}
\begin{align*}
\eta_M^\prime &= \left[\frac{\gamma_{M-1}}{\sum\limits_{m = M-1}^M p_m\gamma_{M-1} +1}\right]^2, 
\varpi_M = \left[\frac{\gamma_M}{\sum\limits_{m = M}^M p_m\gamma_M +1}\right]^2.
\end{align*}

This proves \textbf{Lemma 1}.

\section{Proof of Lemma~4}
\renewcommand{\theequation}{C.\arabic{equation}}
\setcounter{equation}{0}
To prove \textbf{Lemma~4}, the desired signal power for a CoMP-user is derived as a function of inter-NOMA-user interference power, noise power, and required data rate under joint optimization approach \eqref{mcn_op_1} and distributed optimization approach \eqref{s_mcn_op_1}. Under the distributed approach, each CoMP-BS independently allocates power, and all CoMP-BSs simultaneously transmit to a CoMP-UE, therefore, the resultant desired signal power for CoMP-UEs will be the summation of the individual ones. Finally,  this Lemma can be proved by ensuring that the signal power (normalized with respect to noise power) under the distributed approach is greater than or equal to that under the joint approach.

\underline{\em Desired Signal Power for Joint Optimization Approach}:
Under CoMP-set $y$, the NOMA cluster served by the eNB contains $\Phi_{y,m}$ non-CoMP-UEs and $\Phi_{y,ms_x}$ CoMP-UEs, and the NOMA cluster served by SBS $x$ contains $\Phi_{y,s_x}$ non-CoMP-UEs and $\Phi_{y,ms_x}$ CoMP-users. For CoMP-set $y$, let $\Psi_{y,m} = \Phi_{y,m} + \Phi_{y,ms_x}$ denote the total number of users in the NOMA cluster served by the  eNB and  $\Psi_{y,s_x} = \Phi_{y,s_x} + \Phi_{y,ms_x}$ is the total number of users in the NOMA cluster served by SBS $x$. Then the rate constraint for CoMP-UE $k\in \{1,\cdots, \Phi_{y,ms_x}\}$ under the joint optimization approach in \eqref{mcn_op_1} can be written as 
\begin{align}
R_{k} = \log_2\Bigg(1+ \frac{p_{k}^{(m)}\gamma_{k}^{(m)} + p_{k}^{(s_x)}\gamma_{k}^{(s_x)}}{\sum\limits_{i^\prime = k+1}^{\Psi_{y,m}} p_{i^\prime}^{(m)}\gamma_{k}^{(m)} + \sum\limits_{j^\prime = k+1}^{\Psi_{y,s_x}} p_{j^\prime}^{(s_x)}\gamma_{k}^{(s_x)} + 1}\Bigg) \geq R_k^\prime.
\label{lem5_2}
\end{align}
On the other hand, the CoMP-UEs' rate requirement $R_k^\prime$ in terms of their achievable rates under OMA can be expressed as
\begin{align}
R_{k}^\prime = \beta_{k}\log_2\left(1+\frac{\alpha_k^{(m)} p_{t}^{(m)}\gamma_{k}^{(m)}}{\beta_{k}} + \frac{\alpha_k^{(s_x)} p_{t}^{(s_x)}\gamma_{k}^{(s_x)}}{\beta_{k}}\right)
\label{lem5_3}
\end{align}
where $\beta_{k} = [0,1]$ and $\alpha_{k}^{(n)} = [0,1], \forall n \in \{m, s_x\}$ define the spectrum allocation and power allocation coefficients for CoMP-UE $k$ under OMA. 
Combining \eqref{lem5_2} and \eqref{lem5_3} yields
\begin{align}
\frac{\left(p_t^{(m)}-\sum\limits_{i^\prime = 1}^{k-1} p_{i^\prime}^{(m)}\right)\gamma_{k}^{(m)} + \left(p_t^{(s_x)}- \sum\limits_{j^\prime = 1}^{k-1} p_{j^\prime}^{(s_x)}\right)\gamma_{k}^{(s_x)} +1}{\left(p_t^{(m)}-\sum\limits_{i^\prime = 1}^{k} p_{i^\prime}^{(m)}\right)\gamma_{k}^{(m)} + \left(p_t^{(s_x)}- \sum\limits_{j^\prime = 1}^{k} p_{j^\prime}^{(s_x)}\right)\gamma_{k}^{(s_x)} +1} \geq \left(1+\frac{\alpha_k^{(m)} p_{t}^{(m)}\gamma_{k}^{(m)}}{\beta_{k}} + \frac{\alpha_k^{(s_x)} p_{t}^{(s_x)}\gamma_{k}^{(s_x)}}{\beta_{k}}\right)^{\beta_{k}}
\label{lem5_4}
\end{align}
where $p_t^{(m)} = \sum\limits_{i^\prime = 1}^{\Psi_{y,m}} p_{i^\prime}^{(m)}$ and $p_t^{(s_x)} = \sum\limits_{j^\prime = 1}^{\Psi_{y,s_x}} p_{j^\prime}^{(s_x)}$ are the NOMA power budget at MBS and $x$-th SBS end, respectively. By letting $\Gamma^{(m)} = p_t^{(m)}\gamma_{k}^{(m)}$, $\Gamma^{(s_x)} = p_t^{(s_x)}\gamma_{k}^{(s_x)}$, $\Gamma_{k-1}^{(m)} = \sum\limits_{i^\prime = 1}^{k-1} p_{i^\prime}^{(m)}\gamma_{k}^{(m)}$, $\Gamma_{k-1}^{(s_x)} = \sum\limits_{j^\prime = 1}^{k} p_{j^\prime}^{(s_x)}\gamma_{k}^{(s_x)}$, $\Gamma = \left(1+\frac{\alpha_k^{(m)} p_{t}^{(m)}\gamma_{k}^{(m)}}{\beta_{k}} + \frac{\alpha_k^{(s_x)} p_{t}^{(s_x)}\gamma_{k}^{(s_x)}}{\beta_{k}}\right)^{\beta_{k}}$, and after some algebraic manipulations, we can write \eqref{lem5_4} as 
\begin{align}
p_{k}^{(m)}\gamma_{k}^{(m)} + p_{k}^{(s_x)}\gamma_{k}^{(s_x)} \geq \left( 1 - \frac{1}{\Gamma} \right)\bigg(\Gamma^{(m)} + \Gamma^{(s_x)} + 1 -\Gamma_{k-1}^{(m)}-\Gamma_{k-1}^{(s_x)} \bigg).
\label{lem5_5}
\end{align}
The inequality \eqref{lem5_5} represents the desired signal power for CoMP-UE $k$ as a function of inter-NOMA-user interference power, noise power, and required data rate under joint optimization approach \eqref{mcn_op_1}.


\underline{\em Desired Signal Power for Distributed Optimization Approach}:
Consider the rate constraint for a CoMP-UE $k$ under the distributed optimization approach in \eqref{mcn_op_1}. At the MBS end, i.e., for $n = m$, the rate constraint for CoMP-UE $k$ could be written as
\begin{align}
R_{k}^{(m)} \geq \log_2\left(1+ \frac{\hat{p}_{k}^{(m)}\gamma_{k}^{(m)} }{\sum\limits_{i^\prime = k+1}^{\Psi_{y,m}} \hat{p}_{i^\prime}^{(m)}\gamma_{k}^{(m)} + 1}\right) \geq R_k^\prime
\label{lem5_2_2}
\end{align}
where $\hat{p}_k^{(m)}$ represents the power that the eNB allocates to CoMP-UE $k$ under the distributed optimization approach. 
Combining \eqref{lem5_2_2} and \eqref{lem5_3} gives
\begin{align}
\frac{p_t^{(m)}-\sum\limits_{i^\prime = 1}^{k-1} \hat{p}_{i^\prime}^{(m)}\gamma_{k}^{(m)} + 1}{p_t^{(m)}-\sum\limits_{i^\prime = 1}^{k} \hat{p}_{i^\prime}^{(m)}\gamma_{k}^{(m)} +1} \geq \Gamma. \nonumber
\end{align}
Therefore,
\begin{align}
\hat{p}_{k}^{(m)}\gamma_{k}^{(m)} \geq \left( 1 - \frac{1}{\Gamma} \right)\bigg(\Gamma^{(m)} + 1 - \hat{\Gamma}_{k-1}^{(m)} \bigg)
\label{lem5_5_5}
\end{align}
where $\hat{\Gamma}_{k-1}^{(m)} = \sum_{i^\prime = 1}^{k-1} \hat{p}_{i^\prime}^{(m)}\gamma_{k}^{(m)}$.  Similar to eNB, \eqref{lem5_5_5} can be derived for SBS $x$, i.e., for $n = s_x$, as follows:
\begin{align}
\hat{p}_{k}^{(m)}\gamma_{k}^{(m)} \geq \left( 1 - \frac{1}{\Gamma} \right)\bigg(\Gamma^{(m)} + 1 - \hat{\Gamma}_{k-1}^{(m)} \bigg)
\label{lem5_5_6}
\end{align}
where $\hat{\Gamma}_{k-1}^{(s_x)} = \sum_{j^\prime = 1}^{k-1} \hat{p}_{j^\prime}^{(s_x)}\gamma_{k}^{(s_x)}$. \eqref{lem5_5_5} and \eqref{lem5_5_6} show the desired signal power for CoMP-user $k$ from MBS and SBS $x$, respectively, which are expressed in terms of their respective inter-NOMA-user interference power, noise power and required data rate for CoMP-UE $k$. Therefore, the resultant desired signal power for CoMP-UE $x$ under the distributed approach can be expressed as
\begin{align}
\hat{p}_{k}^{(m)}\gamma_{k}^{(m)} + \hat{p}_{k}^{(s_x)}\gamma_{k}^{(s_x)} \geq \left( 1 - \frac{1}{\Gamma} \right)\bigg(\Gamma^{(m)} + \Gamma^{(s_x)} + 2 - \hat{\Gamma}_{k-1}^{(m)}-\hat{\Gamma}_{k-1}^{(s_x)} \bigg).
\label{lem5_6_6}
\end{align}

\underline{\em Comparison Between the Desired Signal Powers}:
To prove \textbf{Lemma~4}, we need to prove that,
the  desired signal power under the distributed optimization approach is greater than or equal to the  desired signal power under the joint optimization approach, i.e., 
under equality condition of \eqref{lem5_5} and \eqref{lem5_6_6},
\begin{align}
\hat{p}_{k}^{(m)}\gamma_{k}^{(m)} + \hat{p}_{k}^{(s_x)}\gamma_{k}^{(s_x)} \geq p_{k}^{(m)}\gamma_{k}^{(m)} + p_{k}^{(s_x)}\gamma_{k}^{(s_x)}.
\label{lem5_p_1}
\end{align}

\textit{Case 1:} For $k=1$, we get $\Gamma_{k-1}^{(m)} =\Gamma_{k-1}^{(s_x)}=\hat{\Gamma}_{k-1}^{(m)}=\hat{\Gamma}_{k-1}^{(s_x)} = 0$. Substituting the result into \eqref{lem5_4} and \eqref{lem5_6_6} gives 
\begin{align*}
&p_{1}^{(m)}\gamma_{1}^{(m)} + p_{1}^{(s_x)}\gamma_{1}^{(s_x)} = \left( 1 - \frac{1}{\Gamma} \right)\bigg(\Gamma^{(m)} + \Gamma^{(s_x)} + 1\bigg) \\
&\hat{p}_{1}^{(m)}\gamma_{1}^{(m)} + \hat{p}_{1}^{(s_x)}\gamma_{1}^{(s_x)} = \left( 1 - \frac{1}{\Gamma} \right)\bigg(\Gamma^{(m)} + \Gamma^{(s_x)} + 2\bigg).
\end{align*}
Therefore, \eqref{lem5_p_1} is valid for $k = 1$. 

\textit{Case 2:} For $k=2$, $\Gamma_{k-1}^{(m)} =p_{1}^{(m)}\gamma_{1}^{(m)}$, $ \Gamma_{k-1}^{(s_x)}= p_{1}^{(s_x)}\gamma_{1}^{(s_x)}$, $\hat{\Gamma}_{k-1}^{(m)}= \hat{p}_{1}^{(m)}\gamma_{1}^{(m)}$, $\hat{\Gamma}_{k-1}^{(s_x)} = \hat{p}_{1}^{(s_x)}\gamma_{1}^{(s_x)}$. Substituting the result into \eqref{lem5_4} and \eqref{lem5_6_6} gives
\begin{align*}
&p_{2}^{(m)}\gamma_{2}^{(m)} + p_{2}^{(s_x)}\gamma_{2}^{(s_x)} = \left( 1 - \frac{1}{\Gamma} \right)\bigg(\frac{\Gamma^{(m)} + \Gamma^{(s_x)} + 1}{\Gamma}\bigg) \\
&\hat{p}_{2}^{(m)}\gamma_{2}^{(m)} + \hat{p}_{2}^{(s_x)}\gamma_{2}^{(s_x)} = \left( 1 - \frac{1}{\Gamma} \right)\bigg(\frac{\Gamma^{(m)} + \Gamma^{(s_x)} + 2}{\Gamma}\bigg).
\end{align*}
Thus \eqref{lem5_p_1} is also valid for $k = 2$. Finally, for any $k \geq 1$, 
\begin{align*}
&p_{k}^{(m)}\gamma_{k}^{(m)} + p_{k}^{(s_x)}\gamma_{k}^{(s_x)} = \left( 1 - \frac{1}{\Gamma} \right)\bigg(\frac{\Gamma^{(m)} + \Gamma^{(s_x)} + 1}{\Gamma^{k-1}}\bigg), \\
&\hat{p}_{k}^{(m)}\gamma_{k}^{(m)} + \hat{p}_{k}^{(s_x)}\gamma_{k}^{(s_x)} = \left( 1 - \frac{1}{\Gamma} \right)\bigg(\frac{\Gamma^{(m)} + \Gamma^{(s_x)} + 2}{\Gamma^{k-1}}\bigg).
\end{align*}
Moreover, from \eqref{lem5_5}, \eqref{lem5_6_6} and \eqref{lem5_p_1}, it can be noted that if a CoMP-UE's noise power is divided by the number of CoMP-BSs, then the CoMP-UE's data rate under JPO and DPO will be similar, which is stated in \textbf{Lemma~4}.

\end{appendices}

\bibliographystyle{IEEE}

\end{document}